\newtheorem{proposition}{Proposition}
\newtheorem{remark}{Remark}
\newtheorem{theorem}{Theorem}
\newtheorem{lemma}{Lemma}
\newtheorem{definition}{Definition}
\newtheorem{corollary}{Corollary}
\newtheorem{conjecture}{Conjecture}
\DeclareMathAlphabet{\mathppl}{T1}{ppl}{m}{it}
\DeclareMathAlphabet{\mathphv}{T1}{phv}{m}{it}
\DeclareMathAlphabet{\mathpzc}{T1}{pzc}{m}{it}
\newlength{\norlen} \setlength{\norlen}{0.2ex} 
\DeclareMathOperator{\sign}{sgn}
\newcommand{\Vt}[1]{\mathbf{\lowercase{#1}}}
\newcommand{\vtA}{\Vt{A}}
\newcommand{\vtB}{\Vt{B}}
\newcommand{\vtE}{\Vt{E}}
\newcommand{\vtF}{\Vt{F}}
\newcommand{\vtM}{\Vt{M}}
\newcommand{\vtN}{\Vt{N}}
\newcommand{\vtS}{\Vt{S}}
\newcommand{\vtW}{\Vt{W}}
\newcommand{\vtX}{\Vt{X}}
\newcommand{\vtY}{\Vt{Y}}
\newcommand{\vtAlpha}{\Vt{\boldsymbol{\alpha}}}
\newcommand{\vtZero}{\Vt{0}}
\pgfplotsset{compat=1.18}
\definecolor{antiquewhite}{rgb}{0.98, 0.92, 0.84}
\definecolor{antiquefuchsia}{rgb}{0.57, 0.36, 0.51}\definecolor{chestnut}{rgb}{0.8, 0.36, 0.36}
\definecolor{airforceblue}{rgb}{0.36, 0.54, 0.66}
\definecolor{cadmiumorange}{rgb}{0.93, 0.53, 0.18}
\definecolor{bleudefrance}{rgb}{0.19, 0.55, 0.91}
\definecolor{carolinablue}{rgb}{0.6, 0.73, 0.89}
\definecolor{blue(ncs)}{rgb}{0.0, 0.53, 0.74}
\definecolor{dodgerblue}{rgb}{0.12, 0.56, 1.0}
\definecolor{cssgreen}{rgb}{0.0, 0.5, 0.0}
\definecolor{cadmiumgreen}{rgb}{0.0, 0.42, 0.24}
\definecolor{cadmiumorange}{rgb}{0.93, 0.53, 0.18}
\definecolor{amaranth}{rgb}{0.9, 0.17, 0.31}
\definecolor{bluegray}{rgb}{0.4, 0.6, 0.8}
\definecolor{cadmiumgreen}{rgb}{0.0, 0.42, 0.24}
\definecolor{carnationpink}{rgb}{1.0, 0.65, 0.79}
\definecolor{copper}{rgb}{0.72, 0.45, 0.2}
\definecolor{dandelion}{rgb}{0.94, 0.88, 0.19}
\newif\if@in@acrolist
\newrobustcmd{\LU}[2]{\if@in@acrolist#1\else#2\fi}
\newcommand{\ACF}[1]{{\@in@acrolisttrue\acf{#1}}}
\begin{document}

\begin{acronym}[LTE-Advanced]
  \acro{1bCS}{1-bit \LU{C}{c}ompressed \LU{S}{s}ensing}
  \acro{2G}{Second Generation}
  \acro{3-DAP}{3-Dimensional Assignment Problem}
  \acro{3G}{3$^\text{rd}$~Generation}
  \acro{3GPP}{3$^\text{rd}$~Generation Partnership Project}
  \acro{4G}{4$^\text{th}$~Generation}
  \acro{5G}{5$^\text{th}$~Generation}
  \acro{AA}{Antenna Array}
  \acro{AC}{Admission Control}
  \acro{AD}{Attack-Decay}
  \acro{ADC}{analog-to-digital converter}
  \acro{ADMM}{alternating direction method of multipliers}
  \acro{ADSL}{Asymmetric Digital Subscriber Line}
  \acro{AHW}{Alternate Hop-and-Wait}
  \acro{AirComp}{over-the-air computation}
  \acro{AirGT}{\LU{O}{o}ver-the-\LU{A}{a}ir \LU{G}{g}roup \LU{T}{t}esting}
  \acro{AMC}{Adaptive Modulation and Coding}
  \acro{AP}{\LU{A}{a}ccess \LU{P}{p}oint}
  \acro{APA}{Adaptive Power Allocation}
  \acro{ARMA}{Autoregressive Moving Average}
  \acro{ARQ}{\LU{A}{a}utomatic \LU{R}{r}epeat \LU{R}{r}equest}
  \acro{ATES}{Adaptive Throughput-based Efficiency-Satisfaction Trade-Off}
  \acro{AWGN}{additive white Gaussian noise}
  \acro{BAA}{\LU{B}{b}roadband \LU{A}{a}nalog \LU{A}{a}ggregation}
  \acro{BB}{Branch and Bound}
  \acro{BCD}{block coordinate descent}
  \acro{BD}{Block Diagonalization}
  \acro{BER}{Bit Error Rate}
  \acro{BF}{Best Fit}
  \acro{BFD}{bidirectional full duplex}
  \acro{BLER}{BLock Error Rate}
  \acro{BPC}{Binary Power Control}
  \acro{BPSK}{Binary Phase-Shift Keying}
  \acro{BRA}{Balanced Random Allocation}
  \acro{BS}{base station}
  \acro{BSUM}{block successive upper-bound minimization}
  \acro{CAP}{Combinatorial Allocation Problem}
  \acro{CAPEX}{Capital Expenditure}
  \acro{CBF}{Coordinated Beamforming}
  \acro{CBR}{Constant Bit Rate}
  \acro{CBS}{Class Based Scheduling}
  \acro{CC}{Congestion Control}
  \acro{CDF}{Cumulative Distribution Function}
  \acro{CDMA}{Code-Division Multiple Access}
  \acro{CE}{\LU{C}{c}hannel \LU{E}{e}stimation}
  \acro{CL}{Closed Loop}
  \acro{CLPC}{Closed Loop Power Control}
  \acro{CML}{centralized machine learning}
  \acro{CNR}{Channel-to-Noise Ratio}
  \acro{CNN}{\LU{C}{c}onvolutional \LU{N}{n}eural \LU{N}{n}etwork}
  \acro{COMP}{combinatorial orthogonal matching pursuit}
  \acro{CPA}{Cellular Protection Algorithm}
  \acro{CPICH}{Common Pilot Channel}
  \acro{CoCoA}{\LU{C}{c}ommunication efficient distributed dual \LU{C}{c}oordinate \LU{A}{a}scent}
  \acro{CoMAC}{\LU{C}{c}omputation over \LU{M}{m}ultiple-\LU{A}{a}ccess \LU{C}{c}hannels}
  \acro{CoMP}{Coordinated Multi-Point}
  \acro{CQI}{Channel Quality Indicator}
  \acro{CRC}{cyclic redundancy check}
  \acro{CRM}{Constrained Rate Maximization}
	\acro{CRN}{Cognitive Radio Network}
  \acro{CS}{\LU{C}{c}ompressed \LU{S}{s}ensing}
  \acro{CSI}{\LU{C}{c}hannel \LU{S}{s}tate \LU{I}{i}nformation}
  \acro{CSMA}{\LU{C}{c}arrier \LU{S}{s}ense \LU{M}{m}ultiple \LU{A}{a}ccess}
  \acro{CUE}{Cellular User Equipment}
  \acro{D2D}{device-to-device}
  \acro{DAC}{digital-to-analog converter}
  \acro{DC}{direct current}
  \acro{DCA}{Dynamic Channel Allocation}
  \acro{DE}{Differential Evolution}
  \acro{DFT}{Discrete Fourier Transform}
  \acro{DIST}{Distance}
  \acro{DL}{downlink}
  \acro{DMA}{Double Moving Average}
  \acro{DML}{Distributed Machine Learning}
  \acro{DMRS}{demodulation reference signal}
  \acro{D2DM}{D2D Mode}
  \acro{DMS}{D2D Mode Selection}
  \acro{DPC}{Dirty Paper Coding}
  \acro{DRA}{Dynamic Resource Assignment}
  \acro{DSA}{Dynamic Spectrum Access}
  \acro{DSGD}{\LU{D}{d}istributed \LU{S}{s}tochastic \LU{G}{g}radient \LU{D}{d}escent}
  \acro{DSM}{Delay-based Satisfaction Maximization}
  \acro{ECC}{Electronic Communications Committee}
  \acro{EFLC}{Error Feedback Based Load Control}
  \acro{EI}{Efficiency Indicator}
  \acro{eNB}{Evolved Node B}
  \acro{EPA}{Equal Power Allocation}
  \acro{EPC}{Evolved Packet Core}
  \acro{EPS}{Evolved Packet System}
  \acro{E-UTRAN}{Evolved Universal Terrestrial Radio Access Network}
  \acro{ES}{Exhaustive Search}
  \acro{FA}{\LU{F}{f}ederated \LU{A}{a}nalytics}
  \acro{FC}{\LU{F}{f}usion \LU{C}{c}enter}
  \acro{FD}{\LU{F}{f}ederated \LU{D}{d}istillation}
  \acro{FDD}{frequency division duplex}
  \acro{FDM}{Frequency Division Multiplexing}
  \acro{FDMA}{\LU{F}{f}requency \LU{D}{d}ivision \LU{M}{m}ultiple \LU{A}{a}ccess}
  \acro{FedAvg}{\LU{F}{f}ederated \LU{A}{a}veraging}
  \acro{FER}{Frame Erasure Rate}
  \acro{FF}{Fast Fading}
  \acro{FL}{federated learning}
  \acro{FSB}{Fixed Switched Beamforming}
  \acro{FEC}{forward error correcting}
  \acro{FST}{Fixed SNR Target}
  \acro{FTP}{File Transfer Protocol}
  \acro{GA}{Genetic Algorithm}
  \acro{GBR}{Guaranteed Bit Rate}
  \acro{GD}{Gradient Descent}
  \acro{GLR}{Gain to Leakage Ratio}
  \acro{GOS}{Generated Orthogonal Sequence}
  \acro{GPL}{GNU General Public License}
  \acro{GRP}{Grouping}
  \acro{GT}{\LU{G}{g}roup \LU{T}{t}esting}
  \acro{HARQ}{Hybrid Automatic Repeat Request}
  \acro{HD}{half-duplex}
  \acro{HMS}{Harmonic Mode Selection}
  \acro{HOL}{Head Of Line}
  \acro{HSDPA}{High-Speed Downlink Packet Access}
  \acro{HSPA}{High Speed Packet Access}
  \acro{HTTP}{HyperText Transfer Protocol}
  \acro{ICMP}{Internet Control Message Protocol}
  \acro{ICI}{Intercell Interference}
  \acro{ID}{Identification}
  \acro{IETF}{Internet Engineering Task Force}
  \acro{iff}{if and only if}
  \acro{ILP}{Integer Linear Program}
  \acro{JRAPAP}{Joint RB Assignment and Power Allocation Problem}
  \acro{UID}{Unique Identification}
  \acro{IID}{\LU{I}{i}ndependent and \LU{I}{i}dentically \LU{D}{d}istributed}
  \acro{IIR}{Infinite Impulse Response}
  \acro{ILP}{Integer Linear Problem}
  \acro{IMT}{International Mobile Telecommunications}
  \acro{INV}{Inverted Norm-based Grouping}
  \acro{IoT}{Internet of Things}
  \acro{IP}{Integer Programming}
  \acro{IPv6}{Internet Protocol Version 6}
  \acro{ISD}{Inter-Site Distance}
  \acro{ISI}{Inter Symbol Interference}
  \acro{ITU}{International Telecommunication Union}
  \acro{IQ}{in-phase and quadrature}
  \acro{JAFM}{joint assignment and fairness maximization}
  \acro{JAFMA}{joint assignment and fairness maximization algorithm}
  \acro{JOAS}{Joint Opportunistic Assignment and Scheduling}
  \acro{JOS}{Joint Opportunistic Scheduling}
  \acro{JP}{Joint Processing}
	\acro{JS}{Jump-Stay}
  \acro{KKT}{Karush-Kuhn-Tucker}
  \acro{L3}{Layer-3}
  \acro{LAC}{Link Admission Control}
  \acro{LA}{Link Adaptation}
  \acro{LC}{Load Control}
  \acro{LDC}{\LU{L}{l}earning-\LU{D}{d}riven \LU{C}{c}ommunication}
  \acro{LDPC}{low-density parity-check}
  \acro{LO}{\LU{L}{l}ocal \LU{O}{o}scillator}
  \acro{LOS}{line of sight}
  \acro{LP}{Linear Programming}
  \acro{LTE}{Long Term Evolution}
	\acro{LTE-A}{\ac{LTE}-Advanced}
  \acro{LTE-Advanced}{Long Term Evolution Advanced}
  \acro{M2M}{Machine-to-Machine}
  \acro{MAC}{multiple access channel}
  \acro{MANET}{Mobile Ad hoc Network}
  \acro{MC}{Modular Clock}
  \acro{MCS}{modulation and coding scheme}
  \acro{MDB}{Measured Delay Based}
  \acro{MDI}{Minimum D2D Interference}
  \acro{MF}{Matched Filter}
  \acro{MG}{Maximum Gain}
  \acro{MGF}{Moment Generating Function}
  \acro{MH}{Multi-Hop}
  \acro{MIMO}{\LU{M}{m}ultiple \LU{I}{i}nput \LU{M}{m}ultiple \LU{O}{o}utput}
  \acro{MINLP}{mixed integer nonlinear programming}
  \acro{MIP}{Mixed Integer Programming}
  \acro{MISO}{multiple input single output}
  \acro{ML}{\LU{M}{m}achine \LU{L}{l}earning}
  \acro{MLWDF}{Modified Largest Weighted Delay First}
  \acro{MME}{Mobility Management Entity}
  \acro{MMSE}{minimum mean squared error}
  \acro{MOS}{Mean Opinion Score}
  \acro{MPF}{Multicarrier Proportional Fair}
  \acro{MRA}{Maximum Rate Allocation}
  \acro{MR}{Maximum Rate}
  \acro{MRC}{Maximum Ratio Combining}
  \acro{MRT}{Maximum Ratio Transmission}
  \acro{MRUS}{Maximum Rate with User Satisfaction}
  \acro{MS}{Mode Selection}
  \acro{MSE}{\LU{M}{m}ean \LU{S}{s}quared \LU{E}{e}rror}
  \acro{MSI}{Multi-Stream Interference}
  \acro{MTC}{Machine-Type Communication}
  \acro{MTSI}{Multimedia Telephony Services over IMS}
  \acro{MTSM}{Modified Throughput-based Satisfaction Maximization}
  \acro{MU-MIMO}{Multi-User Multiple Input Multiple Output}
  \acro{MU}{Multi-User}
  \acro{MV}{\LU{M}{m}ajority \LU{V}{v}ote}
  \acro{MVCS}{majority vote compressed sensing}
  \acro{NAS}{Non-Access Stratum}
  \acro{NB}{Node B}
	\acro{NCL}{Neighbor Cell List}
 \acro{NCOMP}{Noisy Combinatorial Orthogonal Matching Pursuit}
  \acro{NLP}{Nonlinear Programming}
  \acro{NLOS}{non-line of sight}
  \acro{NMSE}{Normalized Mean Square Error}
  \acro{NOMA}{\LU{N}{n}on-\LU{O}{o}rthogonal \LU{M}{m}ultiple \LU{A}{a}ccess}
  \acro{NORM}{Normalized Projection-based Grouping}
  \acro{NP}{non-polynomial time}
  \acro{NRT}{Non-Real Time}
  \acro{NSPS}{National Security and Public Safety Services}
  \acro{O2I}{Outdoor to Indoor}
  \acro{OFDMA}{\LU{O}{o}rthogonal \LU{F}{f}requency \LU{D}{d}ivision \LU{M}{m}ultiple \LU{A}{a}ccess}
  \acro{OFDM}{Orthogonal Frequency Division Multiplexing}
  \acro{OFPC}{Open Loop with Fractional Path Loss Compensation}
	\acro{O2I}{Outdoor-to-Indoor}
  \acro{OL}{Open Loop}
  \acro{OLPC}{Open-Loop Power Control}
  \acro{OL-PC}{Open-Loop Power Control}
  \acro{OPEX}{Operational Expenditure}
  \acro{ORB}{Orthogonal Random Beamforming}
  \acro{JO-PF}{Joint Opportunistic Proportional Fair}
  \acro{OSI}{Open Systems Interconnection}
  \acro{PAIR}{D2D Pair Gain-based Grouping}
  \acro{PAPR}{Peak-to-Average Power Ratio}
  \acro{P2P}{Peer-to-Peer}
  \acro{PC}{Power Control}
  \acro{PCI}{Physical Cell ID}
  \acro{PDCCH}{physical downlink control channel}
  \acro{PDD}{penalty dual decomposition}
  \acro{PDF}{Probability Density Function}
  \acro{PER}{Packet Error Rate}
  \acro{PF}{Proportional Fair}
  \acro{P-GW}{Packet Data Network Gateway}
  \acro{PL}{Pathloss}
  \acro{PRB}{Physical Resource Block}
  \acro{PROJ}{Projection-based Grouping}
  \acro{ProSe}{Proximity Services}
  \acro{PS}{\LU{P}{p}arameter \LU{S}{s}erver}
  \acro{PSO}{Particle Swarm Optimization}
  \acro{PUCCH}{physical uplink control channel}
  \acro{PZF}{Projected Zero-Forcing}
  \acro{QAM}{Quadrature Amplitude Modulation}
  \acro{QoS}{quality of service}
  \acro{QPSK}{Quadri-Phase Shift Keying}
  \acro{RAISES}{Reallocation-based Assignment for Improved Spectral Efficiency and Satisfaction}
  \acro{RAN}{Radio Access Network}
  \acro{RA}{Resource Allocation}
  \acro{RAT}{Radio Access Technology}
  \acro{RATE}{Rate-based}
  \acro{RB}{resource block}
  \acro{RBG}{Resource Block Group}
  \acro{REF}{Reference Grouping}
  \acro{RF}{radio frequency}
  \acro{RIP}{restricted isometry property}
  \acro{RLC}{Radio Link Control}
  \acro{RM}{Rate Maximization}
  \acro{RNC}{Radio Network Controller}
  \acro{RND}{Random Grouping}
  \acro{RRA}{Radio Resource Allocation}
  \acro{RRM}{\LU{R}{r}adio \LU{R}{r}esource \LU{M}{m}anagement}
  \acro{RSCP}{Received Signal Code Power}
  \acro{RSRP}{reference signal receive power}
  \acro{RSRQ}{Reference Signal Receive Quality}
  \acro{RR}{Round Robin}
  \acro{RRC}{Radio Resource Control}
  \acro{RSSI}{received signal strength indicator}
  \acro{RT}{Real Time}
  \acro{RU}{Resource Unit}
  \acro{RUNE}{RUdimentary Network Emulator}
  \acro{RV}{random variable}
  \acro{SAA}{Small Argument Approximation}
  \acro{SAC}{Session Admission Control}
  \acro{SCM}{Spatial Channel Model}
  \acro{SC-FDMA}{Single Carrier - Frequency Division Multiple Access}
  \acro{SD}{Soft Dropping}
  \acro{S-D}{Source-Destination}
  \acro{SDPC}{Soft Dropping Power Control}
  \acro{SDMA}{Space-Division Multiple Access}
  \acro{SDR}{semidefinite relaxation}
  \acro{SDP}{semidefinite programming}
  \acro{SecAgg}{\LU{S}{s}ecure \LU{A}{a}ggregation}
  \acro{SER}{Symbol Error Rate}
  \acro{SES}{Simple Exponential Smoothing}
  \acro{S-GW}{Serving Gateway}
  \acro{SGD}{\LU{S}{s}tochastic \LU{G}{g}radient \LU{D}{d}escent}  
  \acro{SINR}{signal-to-interference-plus-noise ratio}
  \acro{SI}{self-interference}
  \acro{SIP}{Session Initiation Protocol}
  \acro{SISO}{\LU{S}{s}ingle \LU{I}{i}nput \LU{S}{s}ingle \LU{O}{o}utput}
  \acro{SIMO}{Single Input Multiple Output}
  \acro{SIR}{Signal to Interference Ratio}
  \acro{SLNR}{Signal-to-Leakage-plus-Noise Ratio}
  \acro{SMA}{Simple Moving Average}
  \acro{SNR}{\LU{S}{s}ignal-to-\LU{N}{n}oise \LU{R}{r}atio}
  \acro{SORA}{Satisfaction Oriented Resource Allocation}
  \acro{SORA-NRT}{Satisfaction-Oriented Resource Allocation for Non-Real Time Services}
  \acro{SORA-RT}{Satisfaction-Oriented Resource Allocation for Real Time Services}
  \acro{SPF}{Single-Carrier Proportional Fair}
  \acro{SRA}{Sequential Removal Algorithm}
  \acro{SRS}{sounding reference signal}
  \acro{SU-MIMO}{Single-User Multiple Input Multiple Output}
  \acro{SU}{Single-User}
  \acro{SVD}{Singular Value Decomposition}
  \acro{SVM}{\LU{S}{s}upport \LU{V}{v}ector \LU{M}{m}achine}
  \acro{TCP}{Transmission Control Protocol}
  \acro{TDD}{time division duplex}
  \acro{TDMA}{\LU{T}{t}ime \LU{D}{d}ivision \LU{M}{m}ultiple \LU{A}{a}ccess}
  \acro{TNFD}{three node full duplex}
  \acro{TETRA}{Terrestrial Trunked Radio}
  \acro{TP}{Transmit Power}
  \acro{TPC}{Transmit Power Control}
  \acro{TTI}{transmission time interval}
  \acro{TTR}{Time-To-Rendezvous}
  \acro{TSM}{Throughput-based Satisfaction Maximization}
  \acro{TU}{Typical Urban}
  \acro{UE}{\LU{U}{u}ser \LU{E}{e}quipment}
  \acro{UEPS}{Urgency and Efficiency-based Packet Scheduling}
  \acro{UL}{uplink}
  \acro{UMTS}{Universal Mobile Telecommunications System}
  \acro{URI}{Uniform Resource Identifier}
  \acro{URM}{Unconstrained Rate Maximization}
  \acro{VR}{Virtual Resource}
  \acro{VoIP}{Voice over IP}
  \acro{W-MAC}{\LU{W}{w}ireless \LU{M}{m}ultiple \LU{A}{a}ccess \LU{C}{c}hannel}
  \acro{WAN}{Wireless Access Network}
  \acro{WCDMA}{Wideband Code Division Multiple Access}
  \acro{WF}{Water-filling}
  \acro{WiMAX}{Worldwide Interoperability for Microwave Access}
  \acro{WINNER}{Wireless World Initiative New Radio}
  \acro{WLAN}{Wireless Local Area Network}
  \acro{WMMSE}{weighted minimum mean square error}
  \acro{WMPF}{Weighted Multicarrier Proportional Fair}
  \acro{WPF}{Weighted Proportional Fair}
  \acro{WSN}{Wireless Sensor Network}
  \acro{WWW}{World Wide Web}
  \acro{XIXO}{(Single or Multiple) Input (Single or Multiple) Output}
  \acro{ZF}{Zero-Forcing}
  \acro{ZMCSCG}{Zero Mean Circularly Symmetric Complex Gaussian}
\end{acronym}

\title{Majority Vote Compressed Sensing}

\author{Henrik~Hellström\IEEEauthorrefmark{2}\IEEEauthorrefmark{1},~\IEEEmembership{Member,~IEEE,}
Jiwon~Jeong\IEEEauthorrefmark{1},~\IEEEmembership{Graduate Student Member,~IEEE,}
Ayfer~Özgür\IEEEauthorrefmark{1},~\IEEEmembership{Member,~IEEE,}
Viktoria~Fodor\IEEEauthorrefmark{2},~\IEEEmembership{Member,~IEEE,}
and~Carlo~Fischione\IEEEauthorrefmark{2},~\IEEEmembership{Fellow,~IEEE}\\
\thanks{We acknowledge the financial support of the Ericsson Research Foundation, the Swedish Foundation for Strategic Research: SAICOM, and KTH Digital Futures.}
\IEEEauthorblockA{\IEEEauthorrefmark{1}Electrical Engineering Department, Stanford University, California, USA\\
Emails: hhells@stanford.edu, jeongjw@stanford.edu, aozgur@stanford.edu\\}
\IEEEauthorblockA{\IEEEauthorrefmark{2}School of Electrical Engineering and Computer Science, KTH Royal Institute of Technology, Stockholm, Sweden\\
Emails: hhells@kth.se, vjfodor@kth.se, carlofi@kth.se}
}

\markboth{IEEE Transactions on Wireless Communications, submitted for review}%
{Henrik}


\maketitle

\begin{abstract}
We consider the problem of non-coherent over-the-air computation (AirComp), where $n$ devices carry high-dimensional data vectors $\vtX_i\in\mathbb{R}^d$ of sparsity $\lVert\vtX_i\rVert_0\leq k$ whose sum has to be computed at a receiver. Previous results on non-coherent AirComp require more than $d$ channel uses to compute functions of $\vtX_i$, where the extra redundancy is used to combat non-coherent signal aggregation. However, if the data vectors are sparse, sparsity can be exploited to offer significantly cheaper communication. In this paper, we propose to use random transforms to transmit lower-dimensional projections $\vtS_i\in\mathbb{R}^T$ of the data vectors. These projected vectors are communicated to the receiver using a majority vote (MV)-AirComp scheme, which estimates the bit-vector corresponding to the signs of the aggregated projections, i.e., $\vtY = \text{sign}(\sum_i\vtS_i)$. By leveraging 1-bit compressed sensing (1bCS) at the receiver, the real-valued and high-dimensional aggregate $\sum_i\vtX_i$ can be recovered from $\vtY$. We prove analytically that the proposed MVCS scheme estimates the aggregated data vector $\sum_i \vtX_i$ with $\ell_2$-norm error $\epsilon$ in $T=\mathcal{O}(kn\log(d)/\epsilon^2)$ channel uses. Moreover, we specify algorithms that leverage MVCS for histogram estimation and distributed machine learning. Finally, we provide numerical evaluations that reveal the advantage of MVCS compared to the state-of-the-art.
\end{abstract}

\begin{IEEEkeywords}
Over-the-Air Computation, Compressed Sensing, Non-Coherent, Majority Vote, Machine Learning, Histogram Estimation.
\end{IEEEkeywords}

\section{Introduction}
\Ac{AirComp} is a wireless communication method that leverages signal superposition to communicate mathematical functions. Compared to current communication systems, \ac{AirComp} is poised to reduce latency or spectrum use by a factor proportional to the number of simultaneously transmitting devices. Therefore, \ac{AirComp} has the potential to offer cheap and efficient collection of distributed data in dense wireless networks. Wireless data collection has numerous current and upcoming applications, such as federated analytics, sensor networks, and distributed control systems \cite{csahin2023survey}. This is also increasingly relevant for distributed machine learning, as the growing size of \ac{ML} models necessitates training across many servers \cite{zhang2015stac}. However, while \ac{AirComp} is a promising technology for these applications, it is also associated with significant technical challenges that, so far, make it difficult to implement it in practice.

One of the main challenges of \ac{AirComp} is the lack of reliable error correction. Consider that $n$ devices each carry a message $\vtX_i \in \mathbb{R}^d,\, \forall i\in\{1,...,n\}$ and that we are interested in recreating some function $f(\vtX_1,...,\vtX_n)$ of these messages at an \ac{AP}. With traditional orthogonal communication, these $n$ messages can be communicated almost perfectly to the \ac{AP} by leveraging forward error correcting codes and retransmissions triggered by cyclic redundancy check. The desired function $f(\vtX_1,...,\vtX_n)$ can then be computed at the \ac{AP} with very high levels of reliability. However, in \ac{AirComp}, instead of transmitting these messages separately, all $n$ devices transmit over the same radio resources, and the superimposed signal is used to estimate $f(\vtX_1,...,\vtX_n)$. Since the \ac{AP} never sees individual signals from the devices, error correction and detection are challenging \cite{you2023broadband}. In the current state-of-the-art, an \ac{AirComp} receiver does not know if the function is computed correctly. Instead, such systems generally provide estimates $\hat{f}$ of the desired function, with some residual function estimation error $\epsilon=\lVert\hat{f}(\vtX_1,...,\vtX_n)-f(\vtX_1,...,\vtX_n)\rVert_2$.

For a system designer considering \ac{AirComp}, the estimation error $\epsilon$ can be viewed as a price to pay for the promised latency reduction and spectral efficiency gain. Ideally, this should not be a binary choice, but \ac{AirComp} should be tunable to trade off communication efficiency and $\epsilon$, similar to the role of the code rate in digital communications.

To achieve such a tradeoff several authors have considered the use of \ac{CS} \cite{amiri2020machine, fan20221, edin2024over}. The basic idea of compressed sensing is to project the high-dimensional messages $\vtX_i\in\mathbb{R}^d$ to lower-dimensional vectors $\vtS_i\in\mathbb{R}^T$ by applying a linear transform $\vtS_i \triangleq \mathbf{M}\vtX_i$ at the user devices before transmission. Given that the messages $\vtX_i$ satisfy sparsity constraints $\Vert \vtX_i \Vert_0 \leq k, \forall i \in \{1, \dots, n\}$ and the matrix $\mathbf{M}$ fulfills the restricted isometry property \cite{candes2008introduction}, the messages $\vtX_i$ can be exactly recreated from $\vtS_i$ at the receiver. In the context of \ac{AirComp}, the receiver will not have access to the individual vectors $\vtS_i$ but rather a noisy estimate of $f(\vtS_1,...,\vtS_n)$. Still, the desired function $f(\vtX_1, ..., \vtX_n)$ can be recreated from $\hat{f}(\vtS_1,...,\vtS_n)$, with an estimation error that depends on the dimension $T$ \cite{edin2024over}. As such, the parameter $T$ can be tuned to provide the desired tradeoff.


While \ac{CS} has gotten prior attention in the context of \ac{AirComp}, the state-of-the-art papers make significant simplifications in the channel model, either by limiting the distortions to \ac{AWGN} \cite{amiri2020machine} or by assuming perfect channel inversion at the transmitter-side \cite{amiri2020federated, fan20221, edin2024over}. Both of these simplifications ignore the need for transmitter side phase correction, which is known to be challenging in
practice, since it requires synchronization of the devices' oscillators \cite{goldenbaum2013robust, csahin2023distributed, you2023broadband}. In this paper, we bridge this gap by developing \ac{CS}-based \ac{AirComp} that does not require phase correction at the transmitters. Our main idea is to combine the recently proposed \ac{MV} \ac{AirComp}~\cite{csahin2023distributed} scheme with \ac{1bCS}~\cite{boufounos2008}.

In \ac{MV}-\ac{AirComp}, the $n$ vectors $\vtX_i\in\mathbb{R}^d$ are communicated in $2d$ orthogonal radio resources, with the goal to compute the majority vote, defined as $\vtF = \text{sign}(\sum_i \vtX_i)$. For each element $x_i[t]$ (where $t\in\{1,...,d\}$) of the vector, the devices decide to communicate in one of two orthogonal resources, depending on the sign of $x_i[t]$. By comparing the received power over the two resources, the \ac{AP} can determine the majority vote, effectively collecting one bit of information from $n$ devices. This way, by leveraging $2d$ orthogonal radio resources, all $n$ devices can communicate $d$ bits of information, leading to a spectral efficiency of $n/2$ bits per channel use\footnote{The outcome of the majority vote only consists of $d$ bits of information, but $nd$ bits are used to compute it.}. However, noise in the wireless channel can result in bit errors with \ac{MV}-\ac{AirComp}, which in turn can lead to large estimation error for $\vtF$. To combat these errors, we employ \ac{1bCS}. \ac{1bCS} is a dimensionality reduction method that recreates a real-valued vector $\vtX_*\in\mathbb{R}^d$, using a compressed bit-vector $\vtB\in\{0,1\}^T$, where $T<d$ \cite{boufounos2008}. It allows us to reduce the number of channel uses needed significantly below $d$ in the regime where $n \ll d$ and $k \ll d$. This is the regime of interest for many federated learning and analytics applications. As we discuss later in Section~\ref{sec:background}, \ac{1bCS} is robust to errors under basic assumptions. The combination of these ideas allows the proposed \ac{MVCS} scheme to efficiently and reliably compute real-valued functions without posing any demands on phase synchronization.

The contributions of our paper can be summarized as follows:
\begin{itemize}
    \item We propose, as far as we are aware, the first non-coherent \ac{AirComp} scheme with compressed sensing.
    \item We prove that \ac{MVCS} can achieve any desired estimation error $\epsilon$ in $T=\mathcal{O}\left(kn\log(d)/\epsilon^2\right)$ channel uses, where $d$ is the dimension of the desired function $\sum_i\vtX_i\in\mathbb{R}^d$ and $\lVert\sum_i\vtX_i\rVert_0\leq kn$. 
    \item We demonstrate that \ac{MVCS} can be leveraged for private histogram estimation, without having access to individual user item. 
    \item We provide numerical results for training deep neural networks on the MNIST problem over wireless channels that incorporate path loss, fading, and \ac{AWGN}. The numerical results suggest that \ac{MVCS} can outperform state-of-the-art schemes in distributed machine learning. 
\end{itemize}
\section{Background}
\label{sec:background}
\subsection{1-bit Compressed Sensing}\label{sec:1bCS}
In the original \ac{1bCS} framework, the goal is to estimate a high-dimensional sparse target vector $\vtX_*\in\mathbb{R}^d$ using lower-dimensional and quantized 1-bit measurements of $\vtX_*$. In particular, the measurement process is modeled as a matrix-vector multiplication $\vtS = \mathbf{M}\vtX_*$, where $\mathbf{M}\in\mathbb{R}^{T\times d}$ is called the measurement matrix\footnote{In compressed sensing (not the 1-bit version), exact target reconstruction is possible given that the measurement matrix satisfies the restricted isometry property, which holds for many random matrix ensembles \cite{candes2006stable}. However, \ac{1bCS} is more restrictive, since certain discrete distributions, such as Bernoulli matrices, will yield indistinguishable measurements $\vtS=\vtS'$ for distinct pairs of target vectors $\vtX\neq\vtX'$. Even so, there exists some random matrix constructions, such as the gaussian matrix, that yield exact reconstruction in \ac{1bCS} with high probability \cite{plan2012robust}.}. A common choice is to let the elements of $\mathbf{M}$ be i.i.d. standard Gaussian \acp{RV}. After compression, the measurement $\vtS$ is quantized down to a single bit of information per element $t\in\{1, \hdots, T\}$ as
\begin{equation}\label{eq:quantization}
    b[t] = \sign\left(s[t]\right) \triangleq \begin{cases}
        +1,   & \text{if } s[t] \geq 0 \\
        -1,  & \text{if } s[t] < 0,
    \end{cases}
\end{equation}
where the choice for the tie-breaker of $s[t] = 0$ is arbitrary. Such a measure yields two types of compression. Firstly, a reduction in dimension since $\mathbf{M}$ can be constructed such that $T\ll d$ and secondly a quantization gain down to 1 bit per element. However, note that the quantized measure $\vtB$ does not preserve any information about the magnitude of $\vtS$. As such, it is not possible to recreate the norm of $\vtX_*$ from $\vtB$ but only the relative magnitudes of its elements. Therefore, it is common practice to assume that $\lVert\vtX_*\rVert_2=1$ and force the estimate $\hat{\vtX}\in\mathbb{R}^d$ to be a unit $\ell_2$-norm vector as well. Given this setup, the goal of \ac{1bCS} is to develop an algorithm $\mathcal{A}\left(\vtB; \mathbf{M}\right) \rightarrow \hat{\vtX}$ that can guarantee
\begin{equation}\label{eq:reconstruction_guarantee}
    \lVert\vtX_*-\hat{\vtX}\rVert_2 \leq \epsilon
\end{equation}
with some probability $p$. There are many published algorithms of this sort \cite{boufounos2008, boufounos2009greedy, plan2013one, plan2012robust, zhang2014efficient, xiao20191, xiao2019one, genzel2020robust, matsumoto2024robust}, with variants in the system model, reconstruction performance, and/or computational complexity. The extension of the system model to noisy measurements is of particular importance to our work \cite{plan2012robust, zhang2014efficient, xiao20191, matsumoto2024robust}. A simple way to model noisy measurements is to assign a random probability that each bit is flipped in \eqref{eq:quantization} but there is also a more general model introduced in \cite{plan2012robust}. In particular, they consider that the quantized bits are drawn independently at random and that their expected value is given by
\begin{equation}\label{eq:b_noise}
    \theta\left( \langle \vtM_t,\vtX_* \rangle \right)\triangleq\mathbb{E}\left[b[t]|\vtM_t\right],
\end{equation}
where $\vtM_t\in\mathbb{R}^{1\times d}$ is row $t$ of the measurement matrix, $\langle.,.\rangle$ is the inner product, and $\theta$ is some function. A key result in \ac{1bCS} is that this function $\theta$ does not need to be known to estimate $\vtX_*$ \cite{plan2012robust}. Given that $\mathbf{M}$ is a random Gaussian matrix, the only required assumptions on $\theta$ are that $-1\leq\theta(\cdot)\leq1$ and
\begin{equation}\label{eq:lambda_def}
    \mathbb{E}\left[\theta(g)g\right] \triangleq \lambda_g > 0,
\end{equation}
where $g$ is a standard Gaussian variable. This assumption ensures that the 1-bit measurements $\vtB$ are positively correlated with the measurements $\vtS$. Generally, the error $\epsilon$ in \eqref{eq:reconstruction_guarantee} is inversely proportional to $\lambda_g$, indicating that a higher correlation yields better reconstruction results.

Note that \eqref{eq:lambda_def} comes from the compressed sensing literature, and not from the \ac{AirComp} literature. In such setups, there is only one original vector $\vtX_*$. Therefore, \cite{plan2012robust} can guarantee that $\langle \vtM_t,\vtX_* \rangle$ is standard Gaussian by normalizing $\vtX_*$ to have unit norm. In the \ac{AirComp} case, we are interested in recreating $\vtF\coloneq\sum_i\vtX_i$. We can select the norm of each $\vtX_i$ by performing pre-processing at the sensor devices, but $\lVert\sum_i\vtF\rVert$ cannot be controlled. Therefore, in this paper, we define $\lambda$ as follows.
\begin{equation} \label{eq:lambda_def_2}
    \mathbb{E}\left[\theta(\langle \vtM_t,\vtF \rangle) \langle \vtM_t,\vtF \rangle \right] \triangleq \lambda.
\end{equation}
The \ac{1bCS} method still works under the same requirement on $\theta$; $\lambda$ must be greater than 0.

\subsection{Passive Algorithm}
Most \ac{1bCS} algorithms are iterative algorithms based on linear programming \cite{plan2013one}, convex programming \cite{plan2012robust}, first-order algorithms \cite{matsumoto2024robust}, matching pursuit \cite{boufounos2009greedy}, or variants of gradient descent \cite{boufounos2008, xiao20191, xiao2019one}. However, in \cite{zhang2014efficient}, the authors proposed a passive algorithm that estimates $\vtX_*$ in one shot with a closed-form solution. In addition to being computationally efficient, the passive algorithm has the advantage that it does not impose an exact sparsity $\lVert\hat{\vtX}\rVert_0$. This is important in our application because the receiver only knows an upper bound on the sparsity of the target, while the exact value of $\lVert\vtF\rVert_0$ is unknown. For a complete description of the passive algorithm, we refer to \cite{zhang2014efficient}, but we also give a brief exposition here for completeness and to unify notation.

The passive algorithm leverages the soft-thresholding operator, defined as follows.
\begin{definition}[Soft-Thresholding Operator]\label{def:soft_thresh}
The soft-thresholding operator with argument $\alpha\in\mathbb{R}$ and threshold $\gamma\in\mathbb{R}^+$ is defined as
   \begin{equation}\label{eq:soft_thresh}
    P_{\gamma}(\alpha) \triangleq \begin{cases}
        0, & \text{if } |\alpha| \leq \gamma \\
        \sign\left(\alpha\right)\left(|\alpha| - \gamma\right), & \text{otherwise}
    \end{cases},
\end{equation}
where $\sign(\cdot)$ was defined in \eqref{eq:quantization}. We extend the operator to vector-valued arguments $\vtAlpha\in\mathbb{R}^d$ as
\begin{equation}
    P_{\gamma}(\vtAlpha) \triangleq \left[P_{\gamma}(\alpha[1]),\, P_{\gamma}(\alpha[2]),\, \dots,\, P_{\gamma}(\alpha[d])\right]^H,
\end{equation}
where $(\cdot)^H$ is the transpose, i.e., $P_{\gamma}(\vtAlpha)$ is a column vector.
\end{definition}
Given this definition, the closed-form estimate is computed as
\begin{equation}\label{eq:x_hat}
    \hat{\vtX} = \begin{cases}
        \vtZero, & \text{if } \lVert \frac{1}{T}\mathbf{M}^H\vtB \rVert_{\infty} \leq \gamma, \\
        \frac{1}{\lVert P_{\gamma}\left(\frac{1}{T}\mathbf{M}^H\vtB\right)\rVert_2}P_{\gamma}\left(\frac{1}{T}\mathbf{M}^H\vtB\right) & \text{otherwise},
    \end{cases}
\end{equation}
where $\mathbf{M}$ is the measurement matrix.
With this algorithm, the probability of estimating $\vtX$ within a given error margin $\epsilon$ increases exponentially in $T$, as specified in the following theorem.
\begin{theorem}\label{th:zhang}
    Assume
    \begin{equation}\label{eq:gamma}
        \gamma = 2c\sqrt{\frac{\delta + \log(d)}{T}}
    \end{equation}
    for some constant $c$. Consider $\delta>0$ to be a hyperparameter that can be chosen freely. Additionaly, consider that the noise process defined in \eqref{eq:b_noise} satisfies $\lambda_g>0$. Let $\vtX_*$ be sparse and $\hat{\vtX}$ be selected according to \eqref{eq:x_hat}. Then, with a probability of at least $1-e^{1-\delta}$, we have
    \begin{equation}\label{eq:error}
        \lVert\hat{\vtX}-\vtX_*\rVert_2 \leq \epsilon \triangleq \frac{3\gamma}{\lambda_g}\sqrt{\lVert\vtX_*\rVert_0}.
    \end{equation}
\end{theorem}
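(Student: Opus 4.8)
The plan is to study the unnormalized pre-estimate $\vtU \triangleq \frac{1}{T}\mathbf{M}^H\vtB = \frac{1}{T}\sum_{t=1}^{T} b[t]\,\vtM_t^H$, show that it concentrates sharply around $\lambda_g\vtX_*$ in the $\ell_\infty$-norm, and then let the deterministic soft-thresholding and normalization steps of \eqref{eq:x_hat} convert this into the stated $\ell_2$ guarantee. The first step is to identify the mean. Conditioning on the row $\vtM_t$ and invoking \eqref{eq:b_noise}, each summand obeys $\mathbb{E}[b[t]\,\vtM_t^H]=\mathbb{E}[\theta(\langle\vtM_t,\vtX_*\rangle)\,\vtM_t^H]$. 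Because $\mathbf{M}$ is standard Gaussian and $\lVert\vtX_*\rVert_2=1$, I decompose $\vtM_t^H = g\,\vtX_* + \vtr$ with $g\triangleq\langle\vtM_t,\vtX_*\rangle$ a standard Gaussian and $\vtr$ the component orthogonal to $\vtX_*$; for a Gaussian vector $g$ and $\vtr$ are independent. Hence the orthogonal part drops out, $\mathbb{E}[\theta(g)\vtr]=\mathbb{E}[\theta(g)]\,\mathbb{E}[\vtr]=\vtZero$, and what remains is $\mathbb{E}[\theta(g)g]\,\vtX_*=\lambda_g\vtX_*$ by \eqref{eq:lambda_def}. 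Averaging over $t$ gives $\mathbb{E}[\vtU]=\lambda_g\vtX_*$, so in expectation $\vtU$ points exactly along the target.

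Second, I would establish the event $\lVert\vtU-\lambda_g\vtX_*\rVert_\infty\leq\gamma/2$. Coordinate-wise, $\Elem{\vtU}{j}=\frac{1}{T}\sum_t b[t]\Elem{\mathbf{M}}{tj}$ is an average of $T$ i.i.d.\ terms, since the rows are independent and $b[t]$ depends only on row $t$. Although $b[t]$ and $\Elem{\mathbf{M}}{tj}$ are dependent, $\lvert b[t]\Elem{\mathbf{M}}{tj}\rvert=\lvert\Elem{\mathbf{M}}{tj}\rvert$, so each centered summand is sub-Gaussian with an absolute-constant proxy variance, and a Hoeffding-type bound for sub-Gaussian averages yields $\mathbb{P}[\,\lvert\Elem{\vtU}{j}-\lambda_g\Elem{\vtX_*}{j}\rvert>s\,]\leq 2e^{-c''Ts^2}$. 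A union bound over the $d$ coordinates gives $\mathbb{P}[\,\lVert\vtU-\lambda_g\vtX_*\rVert_\infty>s\,]\leq 2d\,e^{-c''Ts^2}$. Taking $s=\gamma/2$ with $\gamma$ as in \eqref{eq:gamma} makes the exponent equal $-c''c^2(\delta+\log d)$; choosing $c$ so that $c''c^2=1$ collapses the bound to $2e^{-\delta}\leq e^{1-\delta}$, which is precisely the failure probability in the statement.

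Third, on this high-probability event everything is deterministic. Write $k\triangleq\lVert\vtX_*\rVert_0$. Off the support of $\vtX_*$ we have $\lvert\Elem{\vtU}{j}\rvert\leq\gamma/2<\gamma$, so $P_\gamma(\Elem{\vtU}{j})=0=\lambda_g\Elem{\vtX_*}{j}$ and no error is incurred. On each of the $k$ support coordinates I use the uniform property $\lvert P_\gamma(\alpha)-\alpha\rvert\leq\gamma$, immediate from \eqref{eq:soft_thresh}, together with the triangle inequality to get $\lvert P_\gamma(\Elem{\vtU}{j})-\lambda_g\Elem{\vtX_*}{j}\rvert\leq\gamma+\gamma/2=3\gamma/2$, regardless of whether the coordinate survives the threshold. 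Summing gives $\lVert P_\gamma(\vtU)-\lambda_g\vtX_*\rVert_2\leq\frac{3\gamma}{2}\sqrt{k}$. Finally, the standard normalization inequality $\lVert\frac{\vta}{\lVert\vta\rVert_2}-\frac{\vtb}{\lVert\vtb\rVert_2}\rVert_2\leq\frac{2}{\lVert\vtb\rVert_2}\lVert\vta-\vtb\rVert_2$ with $\vta=P_\gamma(\vtU)$ and $\vtb=\lambda_g\vtX_*$ (so $\lVert\vtb\rVert_2=\lambda_g>0$) converts this into $\lVert\hat{\vtX}-\vtX_*\rVert_2\leq\frac{2}{\lambda_g}\cdot\frac{3\gamma}{2}\sqrt{k}=\frac{3\gamma}{\lambda_g}\sqrt{k}$, which is \eqref{eq:error}. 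The degenerate branch $\hat{\vtX}=\vtZero$ is harmless: it forces $\lambda_g\leq\frac{3\gamma}{2}\sqrt{k}$, hence $\epsilon>1\geq\lVert\hat{\vtX}-\vtX_*\rVert_2$.

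I expect the concentration step to be the main obstacle. The delicate point is the statistical dependence between the sign bits $b[t]$ and the measurement entries $\Elem{\mathbf{M}}{tj}$, which forbids treating $b[t]\Elem{\mathbf{M}}{tj}$ as a product of independent factors; the clean resolution is that $\lvert b[t]\Elem{\mathbf{M}}{tj}\rvert=\lvert\Elem{\mathbf{M}}{tj}\rvert$ keeps every summand sub-Gaussian irrespective of this coupling. The remaining care is purely in the constants: tracking the absolute constant in the sub-Gaussian tail so that the threshold \eqref{eq:gamma} produces exactly the $1-e^{1-\delta}$ probability, and confirming that the $g$--$\vtr$ independence used in the mean computation is valid for the Gaussian ensemble.
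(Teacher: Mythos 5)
The paper offers no proof of Theorem~\ref{th:zhang} at all—it defers entirely to \cite{zhang2014efficient}—and your blind reconstruction is correct and follows essentially the same route as that cited source: unbiasedness $\mathbb{E}\left[\frac{1}{T}\mathbf{M}^H\vtB\right]=\lambda_g\vtX_*$ via the Gaussian projection/orthogonal-complement independence, a sub-Gaussian union bound giving $\left\lVert\frac{1}{T}\mathbf{M}^H\vtB-\lambda_g\vtX_*\right\rVert_\infty\leq\gamma/2$ with failure probability $2e^{-\delta}\leq e^{1-\delta}$, and the deterministic soft-threshold-plus-normalization step that reproduces exactly the constant $3\gamma\sqrt{\lVert\vtX_*\rVert_0}/\lambda_g$. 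Your treatment of the two delicate points—the coupling between $b[t]$ and $\Elem{\mathbf{M}}{tj}$, neutralized by $\lvert b[t]\Elem{\mathbf{M}}{tj}\rvert=\lvert\Elem{\mathbf{M}}{tj}\rvert$, and the degenerate branch $\hat{\vtX}=\vtZero$, which forces $\epsilon\geq 2$ and is therefore vacuous—is also sound.
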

\begin{proof}
    See \cite{zhang2014efficient}.
\end{proof}
In other words, in order to guarantee that our estimate $\hat{\vtX}$ is within an $\ell_2$-norm ball of radius $\epsilon$ to the target vector, we need the measurement's dimension $T$ to be in the order of
\begin{equation}\label{eq:communication_cost}
    T = O\left( \frac{\lVert\vtX_*\rVert_0}{(\lambda_g\epsilon)^2}\left(\log(d) + \delta\right) \right).
\end{equation}

\begin{remark}
    Recently, \cite{matsumoto2024robust} proposed a new \ac{1bCS} algorithm that provides a similar guarantee for $T=O\left( \frac{\lVert\vtX_*\rVert_0}{\epsilon}\log(d) \right)$, a factor $\epsilon$ improvement compared to \cite{zhang2014efficient}'s algorithm. However, \cite{matsumoto2024robust} has two downsides compared to \cite{zhang2014efficient}. Firstly, the computational cost is significantly higher, since many iterations can be required to converge. Secondly, in every iteration, the algorithm forces the vector to have an exact sparsity, which might be unknown at the receiver. If the computational capability at the receiver is sufficient and $\lVert\vtF\rVert_0$ is known, \cite{matsumoto2024robust}'s algorithm can be used in place of the passive algorithm without major changes to \ac{MVCS}.
\end{remark}

\subsection{Over-the-Air Computation}
The core idea of \ac{AirComp} is to leverage the electromagnetic superposition of simultaneously transmitted signals to compute a function \cite{goldenbaum2013robust}. This superposition is generally modeled using the \ac{W-MAC} \cite{goldenbaum2013robust}, which is defined as follows.

\begin{definition}[W-MAC]\label{def:W-MAC}
    For any channel use $t\in\mathbb{Z}^+$, the \ac{W-MAC} is a map from $\mathbb{C}^n$ to $\mathbb{C}$ defined to be
    \begin{equation}\label{eq:wmac}
        \left(a_1[t], \hdots, a_n[t]\right) \rightarrow \sum_{i=1}^nl_ih_i[t]a_i[t]+z[t] \triangleq y[t],
    \end{equation}
    where $l_i\in\mathbb{R}$ denotes large-scale fading coefficients, $h_i[t]\in\mathbb{C}$ denotes small-scale fading coefficients, $a_i[t]\in\mathbb{C}$ denotes the transmitted I/Q symbols, and $z[t]\in\mathbb{C}$ denotes \ac{AWGN}. When applicable, $t$ is omitted for brevity.
\end{definition}

From Definition \ref{def:W-MAC} it is clear that the \ac{W-MAC} naturally computes a sum of the transmitted messages. However, due to the phase rotations $\angle h_i[t]$, the superposition can be destructive. Moreover, there is an issue of heterogeneous signal attenuation, which adds real-valued weights $l_i|h_i[t]|$ to the message amplitudes. As wireless researchers, we are used to dealing with these types of distortions, but compared to orthogonal communication schemes, it is significantly more challenging to compensate for the attenuation and phase rotation in \ac{AirComp}. In particular:
\begin{enumerate}
    \item \textbf{Channel Estimation}: It is not straightforward to attach a preamble to an \ac{AirComp} message for channel estimation. Consider that a known preamble is attached to the beginning of each message $a_i$. When the receiver samples $y[t]$ from \eqref{eq:wmac}, the non-coherent addition causes the received pilot to differ significantly from the transmitted ones, making the channel estimation problem challenging. Instead, orthogonal resources could periodically be allocated to estimate channels, but then the communication efficiency benefits of \ac{AirComp} are substantially reduced \cite{abari2016over}.
    \item \textbf{Phase and Frequency tracking:} To avoid phase rotations of the received symbols, it is standard practice to compensate for frequency and phase offsets by aligning the receiver's local oscillator with the incoming signal. In schemes such as IEEE 802.11a, this is achieved by sending dedicated pilot tones that act as a reference for a phase-locked loop at the receiver \cite[Section 27.3.11.10]{ieee80211ax}. However, in \ac{AirComp} the signals are superimposed, so the receiver's local oscillator cannot compensate for the non-coherent superposition of the $n$ waves at the passband. Instead, coherent superposition requires that the \acp{LO} of all $n$ transmitters are synchronized. To put this into perspective, if the carrier waves are in the GHz range, the $n$ transmitters need to be synchronized well within the carrier period $T_c = 1/f_c \leq 1\text{ns}$. The 5G standard, at best, guarantees $260\text{ns}$ time synchronization~\cite{5gspecification}.
\end{enumerate}
For these reasons, there is a significant number of articles on non-coherent \ac{AirComp} schemes \cite{goldenbaum2013robust, csahin2023over, mohammadi2019collaborative, csahin2023distributed, sery2020analog}. Some schemes consider that the magnitude of the channel information is available, i.e., the devices can correct for $l_i|h_i[t]|$ but are unable to synchronize phase \cite{goldenbaum2013robust, csahin2023over, mohammadi2019collaborative}. Others operate with even more stringent assumptions, where neither magnitude nor phase can be corrected at the devices \cite{csahin2023distributed, sery2020analog}. The system proposed in this paper assumes no device-side knowledge of $|h_i[t]|$ nor ability to synchronize phase. However, we assume that large-scale fading $l_i$ is known by the network, to facilitate power control.


\section{Majority Vote 1-bit Compressed Sensing}\label{sec:proposal}
In this section, we provide an application-agnostic description of \ac{MVCS}. Note that the \ac{MV}-\ac{AirComp} scheme is not novel \cite{csahin2023distributed}. Our contribution is the combination of \ac{MV}-\ac{AirComp} with \ac{1bCS} and the associated analysis to show \ac{AirComp} can reach arbitrarily small estimation errors under practical network assumptions.

\subsection{System Model and Assumptions}\label{sec:system_model}
Consider a wireless network with $n$ user devices, each carrying a message $\vtX_i\in\mathbb{R}^d$. Further, consider that the messages are sparse $\lVert\vtX_i\rVert_0\leq k$ and of unit $\ell_2$-norm $\lVert\vtX_i\rVert_2=1$\footnote{Both these assumptions can be made true by applying top-$k$ sparsification and normalizing before transmission.}. The goal of the network is to recover the aggregate $\vtF \triangleq \sum_i\vtX_i$ at an \ac{AP}, by communicating over the \ac{W-MAC} channel model, as defined in Definition \ref{def:W-MAC}. To enable \ac{1bCS}, we consider that every device (including the \ac{AP}) share a common matrix $\mathbf{M}\in\mathbb{R}^{T\times d}$. The matrix is generated by independent sampling from a standard Gaussian distribution. Once sampled, $\mathbf{M}$ remains static. This matrix can be large but, in practice, it can be generated on all devices independently as long as they agree on a random seed.

The devices communicate I/Q symbols $\vtA_i\in\mathbb{C}^{2T}$ over the \ac{W-MAC}, where power control is performed to satisfy the average power constraint
\begin{equation}\label{eq:power_constraint}
        \mathbb{E}_{a_i[t]}\left[|a_i[t]|^2 \vert x_i[t]\right] \leq \overline{P},\,\forall\,i
\end{equation}
where $\overline{P}$ is common to all devices. Throughout the paper, all expectations on the transmitted symbols $a_i[t]$ are conditioned on $\vtX_i$, since we do not wish to impose any known distribution on the transmitted messages. We may omit the explicit expectation for brevity.

We model the large-scale fading as
\begin{equation}\label{eq:pathloss}
    l_i=r_i^{-\frac{\beta}{2}},
\end{equation}
where $r_i$ is the distance from the \ac{AP} to \ac{UE} $i$ and $\beta$ is the path loss exponent. We assume that the large-scale fading remains constant for the duration of at least $2T$ transmissions and that all $l_i$ are known by the \ac{AP}. For the fast-fading channel coefficients, we consider Rayleigh fading, i.e., $h_i[t]\sim\mathcal{C}\mathcal{N}\left(0,\sigma_h^2\right)$ for all $t$. Further, we assume independence between devices $\mathbb{E}[h_i[t]h_j[t]]=0\,\forall i\neq j$ and channel uses $\mathbb{E}[h_i[t]h_i[s]]=0\,\forall t\neq s$, and we consider that there is no \ac{CSI} available.

\subsection{Transmitter Baseband}\label{sec:transmitter}
Before transmission begins, all devices generate a measure $\vtS_i\in\mathbb{R}^T$ as
\begin{align}
    \vtS_i = \mathbf{M}\vtX_i.
\end{align}
To transmit these measures, the \ac{AP} allocates $2T$ orthogonal radio resources that are shared by all $n$ devices, which yields two channel uses for each element of $\vtS_i$. These two channel uses are leveraged to communicate negative and positive measures, respectively. Device $i\in\{1, \hdots, n\}$ will only activate one of the two resources for each index $t\in\{1, \hdots, T\}$, depending on the sign of $s_i[t]$. This way, positive and negative values are encoded in the power of their respective radio resources, rather than in the phase of the IQ symbols \cite{csahin2023distributed}, which is usually how it is implemented in coherent over-the-air computation. We define that channel uses $\{1,\hdots, T\}$ and $\{T+1,\hdots,2T\}$ are used to communicate negative and positive values, respectively. For index $t$, device $i$ will transmit
\begin{align}\label{eq:transmit_IQ}
    a_i[t] &= \sqrt{p_i|s_i[t]|}u\left(-s_i[t]\right), \\
    a_i[t+T] &= \sqrt{p_i|s_i[t]|}u\left(s_i[t]\right), \label{eq:transmit_IQ_2}
\end{align}
where $u(\cdot)$ is the Heaviside step function and $p_i\in\mathbb{R}$ is a power control factor. The power control factor is used to align the received power of the devices and to satisfy the power constraint \eqref{eq:power_constraint}. The selection of $p_i$ is outlined in the following proposition. 
\begin{proposition}\label{prop:power_control}
    Let the measurement matrix $\mathbf{M}$ be a standard Gaussian matrix, let $l_{\min} \triangleq \min_i l_i$, and let $\vtX_i$ be a $k$-sparse vector with unit $\ell_2$-norm. Then, the power control factor
    \begin{equation}\label{eq:tx_power}
        p_i = \sqrt{2\pi}\frac{\overline{P}l_{\min}^2}{l_i^2}
    \end{equation}
    satisfies the average power constraint
    \begin{equation}
        \mathbb{E}_{a_i}\left[|a_i[t]|^2\right] \leq \overline{P},\,\forall\,i,
    \end{equation}
    and aligns the received powers as
    \begin{equation}\label{eq:rx_align}
    \begin{split}
        &\mathbb{E}\left[|l_ih_i[t]a_i[t]|^2\right] =\\
        &\mathbb{E}_{h_i[t],s_i[t]}\left[|l_ih_i[t]\sqrt{p_is_i[t]}u(s_i[t])|^2\right] = \overline{P}\sigma_h^2l_{\min}^2\,\forall\,i.
    \end{split}
    \end{equation}
\end{proposition}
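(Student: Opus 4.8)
The plan is to evaluate the two expectations directly, and the single fact that makes everything go through is that each projected coordinate is \emph{exactly} standard Gaussian. First I would establish this: since $s_i[t] = \langle \vtM_t, \vtX_i \rangle = \sum_{j=1}^d M_{tj}\,x_i[j]$ is a linear combination of the i.i.d.\ standard Gaussian entries of row $\vtM_t$, it is Gaussian with mean $0$ and variance $\sum_j x_i[j]^2 = \lVert \vtX_i \rVert_2^2 = 1$, so $s_i[t] \sim \mathcal{N}(0,1)$. Note that the $k$-sparsity of $\vtX_i$ plays no role in this proposition; only the unit-norm normalization is used.

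Next I would reduce the transmitted second moment to a one-sided (half-normal) first moment. Because the Heaviside factor is idempotent, $u(\cdot)^2 = u(\cdot)$, the transmitted power on the negative resource is $|a_i[t]|^2 = p_i\,|s_i[t]|\,u(-s_i[t])$. Taking expectations and evaluating the resulting one-sided integral $\int_{-\infty}^{0}(-g)\,\phi(g)\,\mathrm{d}g = 1/\sqrt{2\pi}$, where $\phi$ is the standard Gaussian density, gives
\begin{equation}
    \mathbb{E}\left[|a_i[t]|^2\right] = p_i\, \mathbb{E}\left[|s_i[t]|\, u(-s_i[t])\right] = \frac{p_i}{\sqrt{2\pi}}.
\end{equation}
By symmetry of the Gaussian density the positive resource $a_i[t+T]$ yields the identical value, so the claim holds for all $2T$ channel uses. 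Substituting $p_i = \sqrt{2\pi}\,\overline{P}\, l_{\min}^2/l_i^2$ then gives $\mathbb{E}[|a_i[t]|^2] = \overline{P}\, l_{\min}^2/l_i^2$, and since $l_{\min} \le l_i$ by definition the ratio $l_{\min}^2/l_i^2 \le 1$, establishing the power constraint $\mathbb{E}[|a_i[t]|^2] \le \overline{P}$.

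Finally, for the received-power alignment I would use that the fading coefficient $h_i[t]$ is drawn independently of the transmitted symbol $a_i[t]$ (which is a deterministic function of $\mathbf{M}$ and $\vtX_i$ only), so the second moment factorizes:
\begin{equation}
    \mathbb{E}\left[|l_i h_i[t] a_i[t]|^2\right] = l_i^2\, \mathbb{E}\left[|h_i[t]|^2\right]\, \mathbb{E}\left[|a_i[t]|^2\right] = l_i^2\,\sigma_h^2\cdot\frac{\overline{P}\, l_{\min}^2}{l_i^2} = \overline{P}\,\sigma_h^2\, l_{\min}^2.
\end{equation}
The explicit $l_i^2$ from the path loss cancels the $1/l_i^2$ built into $p_i$, leaving a value independent of $i$, which is exactly the alignment claim.

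Since the computation is elementary, there is no deep obstacle. The step requiring the most care is the reduction in the second paragraph: recognizing that the half-wave rectification implemented by the Heaviside function converts the \emph{second} moment of the symbol into a \emph{one-sided first} moment of a standard Gaussian, and verifying that the positive and negative resources contribute equally. The only other point that must be stated explicitly is the independence of $h_i[t]$ and $a_i[t]$ that justifies the factorization in the last display.
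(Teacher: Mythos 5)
Your proof is correct and follows essentially the same route as the paper's: both arguments rest on the observation that $s_i[t]=\langle \vtM_t,\vtX_i\rangle$ is exactly $\mathcal{N}(0,1)$ by the unit-norm normalization, reduce $\mathbb{E}\left[|a_i[t]|^2\right]$ to $p_i/\sqrt{2\pi}$, and conclude via $l_{\min}\leq l_i$ and the independence of $h_i[t]$ from $a_i[t]$. The only cosmetic difference is that you evaluate the one-sided integral $\int_{-\infty}^{0}(-g)\phi(g)\,\mathrm{d}g$ directly, whereas the paper first splits off the factor $\mathbb{E}\left[u(s_i[t])\right]=1/2$ and then invokes the half-normal mean $\mathbb{E}\left[|s_i[t]|\right]=\sqrt{2/\pi}$ --- your version is, if anything, slightly cleaner since it avoids the paper's informal ``uncorrelated'' justification for that factorization.
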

\begin{proof}
    The average transmit power is given by
    \begin{equation}\label{eq:ex_power}
        \mathbb{E}_{a_i}\left[|a_i[t]|^2\right] = p_i\mathbb{E}_{s_i[t]}\left[|s_i[t]|u(s_i[t])\right] \underset{(a)}{=} \frac{p_i}{2}\mathbb{E}_{s_i}\left[|s_i[t]|\right],
    \end{equation}
    where $(a)$ holds since $\mathbb{E}_{s_i}[u(s_i[t])]=1/2$ ($s_i[t]$ is symmetric around zero) and $|s_i[t]|$ is uncorrelated to $u(s_i[t])$. Without loss of generality, consider that the $k$ non-zero elements of $\vtX_i$ are the first $k$ elements $x_i[j]$ where $j\in\{1,\hdots,k\}$. The measure $\vtS_i$ can then be expressed as
    \begin{equation}
        \vtS_i = \begin{pmatrix}
            \sum_{j=1}^km_1[j]x_i[j] \\
            \sum_{j=1}^km_2[j]x_i[j] \\
            \vdots \\
            \sum_{j=1}^km_T[j]x_i[j] \\
        \end{pmatrix},
    \end{equation}
    where $m_t[j]$ is element $j$ of row $t$ of $\mathbf{M}$. Since the elements of $\mathbf{M}$ are i.i.d. standard Gaussian \acp{RV}, we have
    \begin{equation}\label{eq:s_ex}
    \begin{split}
        \mathbb{E}\left[|s_i[t]|\right] &=  \mathbb{E}\left[\left|\sum_{j=1}^km_t[j]x_i[j]\right|\right] = \mathbb{E}\left[\left|G\right|\right] \\
        &\underset{(a)}{=} \frac{\sqrt{2}\lVert\vtX_i\rVert_2}{\sqrt{\pi}} \underset{(b)}{=} \sqrt{\frac{2}{\pi}},
    \end{split}
    \end{equation}
    where $G$ is a zero-mean Gaussian \ac{RV} with variance $\sigma_g^2=\lVert\vtX_i\rVert_2^2$. Equality $(a)$ holds because $|G|$ follows the half-normal distribution and equality $(b)$ holds since we assume that all messages $\vtX_i$ have unit $\ell_2$-norms. Combining \eqref{eq:s_ex} with \eqref{eq:ex_power} and \eqref{eq:tx_power} yields
    \begin{equation}\label{eq:ex_w}
        \mathbb{E}_{a_i}\left[|a_i[t]|^2\right] = \frac{\overline{P}l_{\min}^2}{l_i^2} \underset{(a)}{\leq} \overline{P}.
    \end{equation}
    Note that power constraint $(a)$ in \eqref{eq:ex_w} is only active for the device furthest from the \ac{AP} and that all other devices will on average communicate with less power to achieve alignment. The equality $\mathbb{E}\left[|l_ih_i[t]a_i[t]|^2\right]=\overline{P}\sigma_h^2l_{\min}^2$ follows by combining \eqref{eq:ex_w} with the left-hand side of \eqref{eq:rx_align}.
\end{proof}

\subsection{Receiver Baseband}\label{sec:receiver}
After the user devices transmit $\vtA_i$ by accessing the \ac{W-MAC} $2T$ times, the receiver samples the symbols $\vtY\in\mathbb{C}^{2T}$, where symbol $t\in\{1,\hdots,T\}$ and $t+T$ are
\begin{equation}\label{eq:rx_samples}
\begin{split}
    y[t] &= E\sum_{i=1}^nh_i[t]\sqrt{|s_i[t]|}u\left(-s_i[t]\right) + z[t] \text{ and }\\
    y[t+T] &= E\sum_{i=1}^nh_i[t+T]\sqrt{|s_i[t]|}u\left(s_i[t]\right) + z[t+T],
\end{split}
\end{equation}
where $E\triangleq(2\pi)^{1/4}\sqrt{\overline{P}}l_{\min}$, which is the result of selecting $p_i$ according to Proposition \ref{prop:power_control}. The values $y[t]$ and $y[t+T]$ are interpreted as votes (weighted by $s[i]$) for whether the aggregate measure $\sum_{i}s_i[t]$ is positive or negative, similar to the \ac{MV} \ac{AirComp} scheme proposed in \cite{csahin2023distributed}. The receiver resolves the vote by a simple comparison of the energies in the two radio resources as
\begin{equation}\label{eq:rx_quantize}
    b[t] = \sign\left(|y[t+T]|^2-|y[t]|^2\right),
\end{equation}
where $\sign(\cdot)$ was defined in \eqref{eq:quantization}. This demodulated vector $\vtB\in\{-1,1\}^T$ follows the noise model from \eqref{eq:b_noise} and is therefore compatible with the \ac{1bCS} framework. We prove this statement in the following theorem.
\begin{theorem}\label{th:mvcs}
    Consider that the user devices transmit $\vtA_i\in\mathbb{C}^{2T}$ based on \eqref{eq:transmit_IQ} and \eqref{eq:transmit_IQ_2} with the power control scheme from Proposition \ref{prop:power_control}. Further consider that the receiver samples $\vtY\in\mathbb{C}^{2T}$ via the \ac{W-MAC} defined in Definition \ref{def:W-MAC} and then forms the received bit-vector $\vtB\in\{-1,1\}^T$ according to \eqref{eq:rx_quantize}. Let
    \begin{equation}\label{eq:th2_theta}
        \theta\left( \langle \vtM_t,\vtF \rangle \right) \triangleq \mathbb{E}\left[b[t]|\vtM_t\right],
    \end{equation}
    where $\vtF \triangleq \sum_i\vtX_i$. Then, we have that 
    \begin{equation}
        \theta(\langle\vtM_t,\vtF\rangle) = \frac{E^2\sigma_h^2\langle\vtM_t,\vtF\rangle}{E^2\sigma_h^2\sum_{i=1}^n\left|\langle\vtM_t,\vtX_i\rangle\right| + 2\sigma_z^2}, \,
    \end{equation}
    and $\theta(\cdot)$ satisfies the positive correlation property
    \begin{equation}\label{eq:positive_correlation}
        \mathbb{E}\left[\theta(\langle \vtM_t,\vtF \rangle) \langle \vtM_t,\vtF \rangle \right] > 0,
    \end{equation}
    where $\sigma_z^2$ is the variance of the \ac{AWGN}, and $\sigma_h$ is the Rayleigh fading parameter. 
\end{theorem}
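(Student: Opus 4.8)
The plan is to condition on the measurement row $\vtM_t$ — which fixes every projection $s_i[t]=\langle\vtM_t,\vtX_i\rangle$ — and to exploit the fact that, once $\vtM_t$ is fixed, the two received samples in \eqref{eq:rx_samples} are independent zero-mean circularly symmetric complex Gaussians. Indeed, in $y[t]=E\sum_i h_i[t]\sqrt{|s_i[t]|}u(-s_i[t])+z[t]$ the amplitudes $E\sqrt{|s_i[t]|}u(-s_i[t])$ are real and deterministic given $\vtM_t$, while the $h_i[t]\sim\mathcal{C}\mathcal{N}(0,\sigma_h^2)$ and $z[t]$ are independent circularly symmetric Gaussians. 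Hence $y[t]$ and $y[t+T]$ are themselves circularly symmetric complex Gaussian, and because they occupy distinct channel uses their fading and noise are independent (by the assumptions $\mathbb{E}[h_i[t]h_i[s]]=0$ for $t\neq s$ and independent AWGN). Their conditional variances are
\begin{equation}
V^-\triangleq E^2\sigma_h^2\sum_{i}|s_i[t]|u(-s_i[t])+\sigma_z^2,\quad V^+\triangleq E^2\sigma_h^2\sum_{i}|s_i[t]|u(s_i[t])+\sigma_z^2 .
\end{equation}

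First I would translate the sign comparison \eqref{eq:rx_quantize} into a comparison of magnitudes. Since $y[t]$ and $y[t+T]$ are circularly symmetric complex Gaussian, $|y[t]|^2$ and $|y[t+T]|^2$ are \emph{independent exponential} random variables with means $V^-$ and $V^+$. For independent exponentials the elementary identity $\Pr(|y[t+T]|^2>|y[t]|^2)=V^+/(V^++V^-)$ gives
\begin{equation}
\theta(\langle\vtM_t,\vtF\rangle)=\mathbb{E}[b[t]\,|\,\vtM_t]=\frac{V^+}{V^++V^-}-\frac{V^-}{V^++V^-}=\frac{V^+-V^-}{V^++V^-}.
\end{equation}
It then remains only to substitute the two variances. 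The crucial algebraic cancellation is that $\sum_i|s_i[t]|u(s_i[t])-\sum_i|s_i[t]|u(-s_i[t])=\sum_i s_i[t]=\langle\vtM_t,\vtF\rangle$, while the corresponding sum equals $\sum_i|s_i[t]|=\sum_i|\langle\vtM_t,\vtX_i\rangle|$; the $\sigma_z^2$ terms cancel in the numerator and double in the denominator, yielding exactly the claimed closed form.

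The positive-correlation property then follows almost immediately. Writing $W\triangleq\langle\vtM_t,\vtF\rangle$ and $D\triangleq E^2\sigma_h^2\sum_i|\langle\vtM_t,\vtX_i\rangle|+2\sigma_z^2>0$, the closed form gives $\theta(W)\,W=E^2\sigma_h^2 W^2/D\geq 0$ pointwise. Taking the outer expectation over the standard Gaussian row $\vtM_t$ — under which $W\sim\mathcal{N}(0,\lVert\vtF\rVert_2^2)$ is nonzero almost surely whenever $\vtF\neq\vtZero$ — makes the integrand strictly positive on a set of full measure, so $\mathbb{E}[\theta(W)W]>0$.

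I expect the only real obstacle to be the distributional bookkeeping rather than any deep step: one must verify carefully that the received samples are \emph{circularly symmetric} (so that $|y|^2$ is genuinely exponential, not merely that its real and imaginary parts are Gaussian) and that the two channel uses are independent, since both facts are what justify the clean $V^+/(V^++V^-)$ expression. The tie event $|y[t+T]|^2=|y[t]|^2$ has probability zero for continuous distributions, so the arbitrary tie-breaking in $\sign(\cdot)$ is immaterial. A minor point worth stating explicitly is the hypothesis $\vtF\neq\vtZero$, without which the correlation would be zero rather than strictly positive.
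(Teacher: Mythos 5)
Your proposal is correct and follows essentially the same route as the paper's proof: conditioning on $\vtM_t$, observing that $|y[t]|^2$ and $|y[t+T]|^2$ are independent exponential random variables, and reducing $\mathbb{E}\left[b[t]\,\middle|\,\vtM_t\right]$ to $(\sigma_+^2-\sigma_-^2)/(\sigma_+^2+\sigma_-^2)$ --- you merely invoke the exponential-race identity $\Pr(X>Y)=V^+/(V^++V^-)$ directly where the paper passes through the Laplace CDF of the difference $D$, a cosmetic distinction. Your positivity step is in fact slightly more careful than the paper's, which argues only that $\theta$ preserves the sign of its argument: you make explicit the pointwise bound $\theta(W)W\geq 0$, the fact that $W\sim\mathcal{N}\left(0,\lVert\vtF\rVert_2^2\right)$ is nonzero almost surely, and the implicit hypothesis $\vtF\neq\vtZero$ that the paper leaves unstated.
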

\begin{proof}
    We begin by noting that, given $\mathbf{M}$, $y[t+T]$ and $y[t]$ are zero-mean complex Gaussian \acp{RV} with variances
    \begin{equation}
    \begin{split}
        \sigma_+^2 &= E^2\sigma_h^2\sum_i\left|\langle\vtM_t,\vtX_i\rangle\right|u\left(\langle\vtM_t,\vtX_i\rangle\right) + \sigma_z^2\,\,\,\text{and} \\
        \sigma_-^2 &= E^2\sigma_h^2\sum_i\left|\langle\vtM_t,\vtX_i\rangle\right|u\left(-\langle\vtM_t,\vtX_i\rangle\right) + \sigma_z^2
    \end{split}
    \end{equation}
    respectively. Their powers $|y[t+T]|^2$ and $|y[t]|^2$ are therefore exponential \acp{RV} with rates $\mu_+=1/\sigma_+^2$ and $\mu_-=1/\sigma_-^2$. To compute $b[t]$, we take the difference as
    \begin{equation}
        D \triangleq |y[t+T]|^2 - |y[t]|^2,
    \end{equation}
    which is therefore the difference of two independent exponential \acp{RV}, which in turn is a Laplace \ac{RV}\cite{7019877} with CDF
    \begin{equation}
        F_D(d) = \begin{cases}
            \frac{\sigma_-^2}{\sigma_-^2+\sigma_+^2}e^{d/\sigma_-^2} & \text{when }\,d\leq 0 \\
            1-\frac{\sigma_+^2}{\sigma_-^2+\sigma_+^2}e^{d/\sigma_+^2} & \text{when }\,d> 0.
        \end{cases}
    \end{equation}
    With this CDF we are ready to compute the expected bit as
    \begin{equation}\label{eq:theta}
    \begin{split}
        \mathbb{E}\left[b[t]|\vtM_t\right] &= \Pr\left(D \geq 0\right) - \Pr\left(D < 0\right) = 1 - 2F_D(0) \\
        &= \frac{\sigma_+^2-\sigma_-^2}{\sigma_+^2+\sigma_-^2} = \frac{E^2\sigma_h^2\langle\vtM_t,\vtF\rangle}{E^2\sigma_h^2\sum_{i=1}^n\left|\langle\vtM_t,\vtX_i\rangle\right| + 2\sigma_z^2}.
    \end{split}
    \end{equation}
    Finally, we wish to prove that the positive correlation property \eqref{eq:positive_correlation} holds. From \eqref{eq:th2_theta} and \eqref{eq:theta}, it can be inferred that
    \begin{equation}
        \sign\left(\theta(\langle\vtM_t,\vtF\rangle)\right) = \sign\left(\langle\vtM_t,\vtF\rangle\right).
    \end{equation}
    Since $\theta(\cdot)$ does not change the sign of its argument, it must be true that $\mathbb{E}\left[\theta(\langle \vtM_t,\vtF \rangle) \langle \vtM_t,\vtF \rangle \right] > 0$.
\end{proof}
After collecting the bit-vector $\vtB$, the receiver runs the passive algorithm to reconstruct $\hat{\vtF}$ as
\begin{equation}\label{eq:x_hat2}
    \hat{\vtF} = \begin{cases}
        0, & \text{if } \lVert \frac{1}{T}\mathbf{M}^H\vtB \rVert_{\infty} \leq \gamma, \\
        \frac{1}{\lVert P_{\gamma}\left(\frac{1}{T}\mathbf{M}^H\vtB\right)\rVert_2}P_{\gamma}\left(\frac{1}{T}\mathbf{M}^H\vtB\right) & \text{otherwise},
    \end{cases}
\end{equation}
where $\gamma$ is a hyperparameter, and $P_{\gamma}(\cdot)$ is the soft-thresholding operator from Definition \ref{def:soft_thresh}. Since Theorem \ref{th:mvcs} establishes that
our $\theta$ satisfies the required assumptions, we know that the error bound in Theorem \ref{th:zhang} holds for \ac{MVCS}.

\begin{remark} \label{remark:unit_norm}
    The estimated aggregate from \eqref{eq:x_hat2} is either $\vtZero$ or of unit $\ell_2$-norm. In our system, we assume that $\lVert\vtX_i\rVert_2=1$ for all $i$, but this is not enough information for the receiver to retrieve $\lVert\vtF\rVert_2$. The norm of the aggregate vector may be important, depending on the application. For example, if $\vtF$ is the gradient in distributed gradient descent, the norm $\lVert\vtF\rVert_2$ has an implication on step size and the stopping condition.
    
    Without supplementary information about $\vtX_i$, the only thing known at the receiver is that 
    \begin{equation} \label{eq:2norm}
        0 \leq \lVert\vtF\rVert_2 \leq n.
    \end{equation}
\end{remark}

Now we are ready to present the main result of our work.
\begin{theorem}\label{th:mvcs2}
    Consider the communication scheme where the user devices transmit $\vtA_i\in\mathbb{C}^{2T}$ based on \eqref{eq:transmit_IQ} and \eqref{eq:transmit_IQ_2} with the power control scheme from Proposition \ref{prop:power_control}, and that the receiver forms the received bit-vector $\vtB\in\{-1,1\}^T$ according to \eqref{eq:rx_quantize}. Additionally, consider that the measurement matrix $\mathbf{M}$ is constructed such that $T=\mathcal{O}(kn\log(d)/\epsilon^2)$, where $d$ is the dimension of the desired function $\vtF=\sum_i \vtX_i$, $k\triangleq\lVert\vtX_i\rVert_0$, and $n$ is the number of devices. Then, it holds that the estimate $\hat{\vtF}$ calculated by \eqref{eq:x_hat2} satisfies
    \begin{equation}\label{eq:mvcs_normalized_bound}
        \left\lVert\hat{\vtF} - \frac{\vtF}{\lVert \vtF \rVert_2}\right\rVert_2 \leq \epsilon
    \end{equation}
    with a probability of at least $1-e^{1-\delta}$.
\begin{proof}
    Assuming that $\lambda$ is positive, i.e. 1-bit measurements are positively correlated with the measurements, Theorem~\ref{th:zhang} yields the upper bound \eqref{eq:error} on the estimation error. Note that, in \eqref{eq:error}, both $\hat{\vtX}$ and $\vtX_*$ are of unit norm\footnote{To be precise, $\hat{\vtX}$ is either of unit norm or it is a zero vector. The $\hat{\vtX}=\vtZero$ case has no impact on \eqref{eq:mvcs_normalized_bound} unless $\epsilon\geq1$. If $\epsilon\geq1$, the problem is uninteresting since it is solved by setting $\hat{\vtX}=\vtZero$.}. Therefore,
    \begin{equation}
        \lVert\hat{\vtX}-\vtX_*\rVert_2^2 = \lVert\hat{\vtX}\rVert_2^2 + \lVert\vtX_*\rVert_2^2 - 2\langle \hat{\vtX}, \vtX_*\rangle = 2 - 2\cos(\phi),
    \end{equation}
    where $\phi$ is the angle between $\hat{\vtX}$ and $\vtX_*$. The bound \eqref{eq:error} can therefore equivalently be expressed as
    \begin{equation}
        \sqrt{2 - 2\cos(\phi)} \leq \epsilon \triangleq \frac{3\gamma}{\lambda}\sqrt{\lVert\vtF\rVert_0}.
    \end{equation}
    In other words, we can interpret Theorem \ref{th:zhang} as stating that it is possible to estimate a target vector using a unit vector $\hat{\vtF}$, with a guarantee on the angle $\phi$ between the two vectors.
    Since Theorem~\ref{th:mvcs} verifies that $\lambda$ is positive for \ac{MVCS}, the same angular guarantee holds, even though $\vtF$ is not of unit norm. Another way to express this is to say that $\epsilon$ is bounded with respect to $\vtF/\lVert\vtF\rVert_2$, as in \eqref{eq:mvcs_normalized_bound}. Finally, the order-wise communication cost $T=\mathcal{O}(kn\log(d)/\epsilon^2)$ is given by \eqref{eq:communication_cost}.
\end{proof}
\end{theorem}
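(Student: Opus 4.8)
The plan is to reduce the claim to Theorem~\ref{th:zhang} by exploiting the scale-invariance of the one-bit measurements. The bit-vector $\vtB$ is determined in distribution solely through the signs captured by $\theta(\langle\vtM_t,\vtF\rangle)$, and scaling $\vtF$ by any positive constant leaves those signs unchanged. Hence the passive estimator can at best recover the \emph{direction} of $\vtF$, which is precisely why the guarantee must be phrased against the normalized target $\vtX_*\triangleq\vtF/\lVert\vtF\rVert_2$ rather than $\vtF$ itself (consistent with Remark~\ref{remark:unit_norm}, where $\lVert\vtF\rVert_2$ is shown to be irrecoverable). First I would fix this unit-norm target and verify the hypotheses of Theorem~\ref{th:zhang} for it.

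The one technical point is that Theorem~\ref{th:zhang} requires the positive-correlation constant $\lambda_g=\mathbb{E}[\theta(g)g]$ to be positive for a \emph{standard} Gaussian $g$, whereas Theorem~\ref{th:mvcs} supplies $\lambda>0$ for the unnormalized inner product of \eqref{eq:lambda_def_2}. I would bridge the two by a change of variables. Since $g\triangleq\langle\vtM_t,\vtX_*\rangle$ is standard Gaussian and $\langle\vtM_t,\vtF\rangle=\lVert\vtF\rVert_2\,g$,
\begin{equation}
    \lambda_g=\mathbb{E}\!\left[\theta(\lVert\vtF\rVert_2\,g)\,g\right]=\frac{1}{\lVert\vtF\rVert_2}\,\mathbb{E}\!\left[\theta(\langle\vtM_t,\vtF\rangle)\langle\vtM_t,\vtF\rangle\right]=\frac{\lambda}{\lVert\vtF\rVert_2}.
\end{equation}
Because Theorem~\ref{th:mvcs} guarantees $\lambda>0$ and $\lVert\vtF\rVert_2>0$ in the nontrivial case, this yields $\lambda_g>0$, so the assumption holds for the normalized target. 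Equivalently, one may simply invoke the fact, established at the end of the proof of Theorem~\ref{th:mvcs}, that $\theta$ preserves the sign of its argument.

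With the hypotheses in place, Theorem~\ref{th:zhang} applied to $\vtX_*=\vtF/\lVert\vtF\rVert_2$ gives, with probability at least $1-e^{1-\delta}$, the bound $\lVert\hat{\vtF}-\vtF/\lVert\vtF\rVert_2\rVert_2\leq 3\gamma\sqrt{\lVert\vtF\rVert_0}/\lambda_g$, which is exactly \eqref{eq:mvcs_normalized_bound}. It remains to recover the stated channel-use count. I would bound the sparsity of the aggregate by subadditivity of $\lVert\cdot\rVert_0$,
\begin{equation}
    \lVert\vtF\rVert_0=\Bigl\lVert\textstyle\sum_{i=1}^n\vtX_i\Bigr\rVert_0\leq\sum_{i=1}^n\lVert\vtX_i\rVert_0\leq kn,
\end{equation}
and substitute $\lVert\vtX_*\rVert_0=\lVert\vtF\rVert_0\leq kn$ into the rearranged error expression \eqref{eq:communication_cost}. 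Treating $c$, $\lambda_g$, and $\delta$ as constants then gives $T=\mathcal{O}(kn\log(d)/\epsilon^2)$.

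The main obstacle is conceptual rather than computational: it is the reconciliation of the two correlation constants $\lambda$ and $\lambda_g$, together with the realization that the magnitude of $\vtF$ is fundamentally irrecoverable, so the theorem can only be stated against the normalized direction. Once the scale-invariance is handled correctly, invoking Theorem~\ref{th:zhang}, applying the sparsity bound, and extracting the order-wise count are all routine.
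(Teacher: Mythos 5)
Your proof is correct and follows the paper's overall reduction --- invoke Theorem~\ref{th:zhang} for the unit-norm target $\vtF/\lVert\vtF\rVert_2$, use Theorem~\ref{th:mvcs} for positivity of the correlation, and read the channel-use count off \eqref{eq:communication_cost} --- but your bridging step is genuinely different from, and cleaner than, the paper's. The paper argues geometrically: it rewrites Zhang's bound as $\sqrt{2-2\cos(\phi)}\leq\epsilon$ and asserts that ``the same angular guarantee holds'' for the non-unit-norm $\vtF$, silently substituting $\lambda$ for $\lambda_g$ in \eqref{eq:error} without relating the two constants. You instead reconcile them explicitly via the change of variables $g=\langle\vtM_t,\vtF\rangle/\lVert\vtF\rVert_2$, obtaining $\lambda_g=\lambda/\lVert\vtF\rVert_2$, so that the positivity supplied by Theorem~\ref{th:mvcs} transfers to the $\lambda_g>0$ hypothesis that Theorem~\ref{th:zhang} actually requires; this makes precise exactly what the paper leaves implicit, and it also exposes that the constant hidden in $T=\mathcal{O}(kn\log(d)/\epsilon^2)$ scales as $\lambda_g^{-2}=\lVert\vtF\rVert_2^2/\lambda^2$, which both you and the paper then suppress. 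You likewise spell out the subadditivity bound $\lVert\vtF\rVert_0\leq kn$ that the paper uses without comment. Two minor caveats, neither load-bearing: your opening claim that the distribution of $\vtB$ is invariant to positive scaling of $\vtF$ is not quite right --- scaling changes the bit-flip probabilities through the noise term $2\sigma_z^2$ in Theorem~\ref{th:mvcs}; only the signs of $\theta$ are preserved --- and your identity $\lambda_g=\lambda/\lVert\vtF\rVert_2$ treats $\theta$ as a deterministic function of the scalar $\langle\vtM_t,\vtF\rangle$, which for $n>1$ is an abuse of notation since the denominator $\sum_{i}\left|\langle\vtM_t,\vtX_i\rangle\right|$ is not determined by that inner product alone; however, this is the same abuse the paper itself commits in \eqref{eq:th2_theta}, and your identity survives rigorously via the tower property if $\theta$ is redefined as the conditional expectation of $b[t]$ given $\langle\vtM_t,\vtX_*\rangle$.
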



\section{MVCS Applied to Machine Learning}
\label{sec:applications}
In this section, we specify how \ac{MVCS} can be leveraged for the application of distributed machine learning. The priority of \ac{MVCS} is to support distributed \ac{SGD}, and therefore we leverage methods related to gradient sparsification. Specifically, we use the $r$Top-$k$ algorithm \cite{barnes2020rtop} and error accumulation \cite{sattler2019sparse}. However, the model generalizes to \ac{FedAvg} by incorporating $E$ local iterations in each communication round and by communicating model updates rather than gradients. If $E=1$, the model corresponds to distributed \ac{SGD}, and if $E>1$ it corresponds to \ac{FedAvg}.

The considered network setup is the same as in Section~\ref{sec:background}, i.e., $n$ devices are communicating with one \ac{AP} over the W-MAC from Definition \ref{def:W-MAC} with Rayleigh fast-fading and large-scale fading according to \eqref{eq:distance} and \eqref{eq:pathloss}. We maintain the communication and power control schemes proposed in Section \ref{sec:proposal} but add pre- and post-processing steps that frame the communication problem within the \ac{ML} application. 

The goal of the \ac{ML} application is to converge on a machine learning model, parameterized by $\vtW\in\mathbb{R}^{d}$, that minimizes some loss function $l(\cdot)$. To find $\vtW$, there is a nested iterative process that finds increasingly better model parameters until the system either runs out of resources or decides that the model has a sufficiently small loss. In this process, the \ac{AP} has the role of coordinating the devices, who are the real work-horses in the training process. The devices are both responsible for running the optimization algorithms, and they are providing the datasets $\mathcal{D}_i$ that are used to train $\vtW$.

Before training begins, the \ac{AP} decides the structure of the \ac{ML} model, selects the learning rate $\mu$, and randomly initiates the model parameters $\vtW^{(0)}$. Communication round $l$ is initiated by the \ac{AP}, which broadcasts the model parameters $\vtW^{(l)}$ to all $n$ devices. Upon receiving the model parameters, the devices run an internal loop for $E\in\mathbb{N}$ iterations to train the model with learning rate $\mu$ as
\begin{equation}\label{eq:training_iteration}
    \vtW_i^{(l)}[e] = \vtW_i^{(l)}[e-1] - \mu\nabla_i^{(l)}[e-1],
\end{equation}
for all $e\in\{1,...,E\}$, where $\vtW_i^{(l)}[0]\triangleq\vtW^{(l)}$ and $\nabla_i^{(l)}[e]$ is the gradient of the loss function $L(\cdot)$ with respect to $\vtW_i^{(l)}$. The gradient is computed as
\begin{equation}\label{eq:gradient}
    \nabla_i^{(l)}[e] = \frac{1}{K}\sum_{k\in\mathcal{K}}\nabla L\left(\vtW_i^{(l)}[e]; \mathcal{D}_i[k]\right),
\end{equation}
where $\mathcal{D}_i[k]$ refers to the $k$th datasample of $\mathcal{D}_i$, and $\mathcal{K}$ is a set of $K$ randomly chosen unique integers from $\{1,..,|\mathcal{D}_i|\}$, i.e., $K$ is the batch size of the mini-batch gradient descent algorithm. After running $E$ iterations of the training algorithm, device $i$ has access to a local model $\vtW_i^{(l)}[E]$. At that point, it computes the model update as
\begin{equation}
    \Delta\vtW_i^{(l)} = \vtW^{(l)} - \vtW_i^{(l)}[E] + \mathbf{\Delta}_i^{(l-1)},
\end{equation}
where $\mathbf{\Delta}_i^{(l-1)}$ is the accumulated sparsification error of the previous communication round, computed in \eqref{eq:error_accumulate}. For the first communication round, $\mathbf{\Delta}_i^{(-1)}=\vtZero$. Once training is complete, the devices sparsify the model updates with the $r$Top-$k$ algorithm \cite{barnes2020rtop} which does two rounds of sparsification. Firstly, all entries of $\Delta\vtW_i^{(l)}$ are set to zero, except for the $r\leq d$ elements of greatest magnitude, i.e.,
\begin{equation}
    \text{top}_r(\Delta\vtW_i^{(l)}) \triangleq \begin{cases}
        \Delta w_i^{(l)}[j]   & \text{if } j\in\{\pi(1), ..., \pi(r)\}, \\
        0                       & \text{otherwise},
    \end{cases}
\end{equation}
where $j\in\{1,...,d\}$ and $\pi$ is a permutation of $\{1,...,d\}$ such that $|\Delta w_i^{(l)}[\pi(i)]|\geq |\Delta w_i^{(l)}[\pi(i+1)]|$ for all $i\in\{1,...,d-1\}$. Secondly, $k\leq r$ random elements of $\text{top}_r(\Delta\vtW_i^{(l)})$ are kept and the remaining elements are dropped, resulting in a $k$-sparse vector $\text{k-top}_r(\Delta\vtW_i^{(l)})$. To align with the notation in Section \ref{sec:proposal}, we denote this vector as
\begin{equation}\label{eq:k_top_r}
    \vtX_i^{(l)} \triangleq \text{k-top}_r(\Delta\vtW_i^{(l)}).
\end{equation}
The information which is lost during sparsification is stored in the memory of the devices, according to the error accumulation algorithm \cite{sattler2019sparse},
\begin{equation}\label{eq:error_accumulate}
    \mathbf{\Delta}_i^{(l)} = \Delta\vtW_i^{(l)} - \vtX_i^{(l)}.
\end{equation}
This accumulated error $\mathbf{\Delta}_i^{(l)}$ is used in the following communication round to compute $\Delta\vtW_i^{(l+1)}$.

At this point, we have $n$ devices carrying $k$-sparse vectors $\vtX_i^{(l)}$ and we are interested in recreating the sum of the sparsified model updates, i.e., $\vtX_*^{(l)} \triangleq \sum_i\vtX_i^{(l)}$. This aligns with the setup for \ac{MVCS}. The transmitter and receiver basebands from Sections \ref{sec:transmitter} and \ref{sec:receiver} are used to communicate and compute the estimate $\hat{\vtX}^{(l)}$ at the \ac{AP} in $2T$ channel uses. The \ac{AP} applies this estimated update as
\begin{equation}\label{eq:model_update}
    \vtW^{(l+1)} = \vtW^{(l)} - \eta^{(l)}\hat{\vtX}^{(l)},
\end{equation}
where $\eta^{(l)}$ is the step size. This concludes communication round $l$. The process is then repeated for multiple rounds until some stopping condition is met, and we use $L$ to denote the total number of communication rounds. The entire \ac{ML} scheme is summarized in Algorithm \ref{alg:MVCS_FL}.
\begin{algorithm}[t]
\caption{Federated Learning via MVCS}
\label{alg:MVCS_FL}
\begin{algorithmic}[1]
\ps
    \State initialize $\vtW^{(0)}$
    \State initialize $\mathbf{\Delta}_i^{(-1)}\leftarrow\vtZero$
\endps
\For{each round $l\in\{0,1,..,L-1\}$}
    \ps
        \State broadcast $\vtW^{(l)}$ to devices
    \endps
    \device 
        \State $\vtW_i^{(l)}[E] \gets$ Equation \eqref{eq:training_iteration}
        \State $\Delta \vtW_{i}^{(l)} \gets \vtW^{(l)} - \vtW_i^{(l)}[E] + \mathbf{\Delta}_i^{(l-1)}$
        \State $\vtX_i \gets \text{k-top}_r(\Delta\vtW_i^{(l)})$
        \State $\mathbf{\Delta}_i^{(l)} = \Delta\vtW_i^{(l)} - \vtX_i^{(l)}$
        \State $\vtS_i \gets \mathbf{M}\vtX_i$
        \State $a_i[t] \gets \sqrt{p_i|s_t[t]|}u(-s_t[t])$
        \State $a_i[t+T] \gets \sqrt{p_i|s_t[t]|}u(s_t[t])$
    \enddevice
    \For{each $t=1,2,...,2T$}
        \devices
            \State transmit $a_i[t]$ over the W-MAC
        \enddevices
        \ps
            \State $y[t] \gets$ Equation \eqref{eq:rx_samples}
        \endps
    \EndFor
    \ps
        \State $b[t] \gets$ Equation \eqref{eq:rx_quantize} for all $t\in\{1,...,T\}$
        \State $\hat{\vtX} \gets$ Equation \eqref{eq:x_hat2}
        
        \State $\eta^{(l)} \gets \eta_s + (\eta_f -\eta_s)\cdot(1-e^{-\eta_d l})$
        \State $\vtW^{(l+1)} \gets \vtW^{(l)} - \eta^{(l)}\hat{\vtX}^{(l)}$
        
    \endps
\EndFor
\end{algorithmic}
\end{algorithm}

\subsection{Step Size}\label{sec:step_size}
In a traditional \ac{FedAvg} setup, the model updates are communicated separately and then combined in the processor of the receiver. This way, the norm of the aggregated model update is available, which is advantageous for convergence. In particular, as training starts from a randomly initialized model, $\vtW^{(0)}$ is generally far from the optimal point, which results in large updates. As the weights approach the optimum, the model update norms decline. Having this information at the receiver helps to select the step size since it is beneficial to take large steps when the weights are far from the optimum. This information is unavailable with the proposed \ac{MV} scheme since all norm information is lost during communication.

With this in mind, the step size schedule for $\eta^{(l)}$ becomes critical when training with \ac{MV}. In this work, we have chosen to leverage an exponential schedule, which aligns with the norms of our numerical results. Specifically,
\begin{equation}\label{eq:learning_schedule}
    \eta^{(l)} = \eta_s + (\eta_f - \eta_s)\cdot(1-e^{-\eta_d l}),
\end{equation}
where $\eta_s$, $\eta_f$, and $\eta_d$ are the starting step size, final step size, and step size decay, respectively. In our numerical results, we select these hyperparameters based on previous training results. In particular, we train a network with an oracle to fit these three parameters. Then, when we train another network with \ac{MVCS}, the step size follows \eqref{eq:learning_schedule} with the fitted parameters. In practice, the choice of $\eta_s$, $\eta_f$, and $\eta_d$ can be based on prior experience of training other models.

\section{MVCS Applied to Histogram Estimation}\label{sec:histogram_estimation}
In this section, we illustrate how \ac{MVCS} can be applied to exact histogram estimation. In our earlier work \cite{hellstrom2024over}, we considered the problem of estimating the histogram support. With \ac{MVCS}, we extend this work to estimate the full histogram. We consider the set-up where $n$ user devices hold a message $\vtX_i\in\{0, 1\}^d$, which are one-hot vectors, and the non-zero element of $\vtX_i$ indicates the location of device $i$'s item. Given this formulation, the histogram can be obtained by computing $\vtX_* = \sum_i\vtX_i$. The goal of the \ac{AP} is to obtain the histogram in a secure way, i.e. directly estimating $\sum_i\vtX_i$ without obtaining knowledge of any individual $\vtX_i$.


Note that, as mentioned in Remark~\ref{remark:unit_norm}, the \ac{MVCS} algorithm yields a unit-norm estimate $\hat{\vtX}\approx\vtX_*/\lVert\vtX_*\rVert_2$.
To retrieve the full histogram $\vtX_*$, additional post-processing steps are required. Since $\vtX_*$ is a sum of one-hot vectors, 
each element of $\vtX_*/\lVert\vtX_*\rVert_2$ is an integer multiple of $1/\lVert\vtX_*\rVert_2$. Furthermore, since the sum of all elements in $\vtX_*$ is equal to $n$, the sum of the elements in the normalized histogram will be $n/\lVert\vtX_*\rVert_2$. Thus, the full histogram can be recovered from its normalized version by first dividing it by the sum of its elements, and then multiplying it by $n$,
\begin{equation}
    \vtX_*=\frac{\vtX_*}{\lVert\vtX_*\rVert_2} \div \left(\sum_{j=1}^d \frac{x_*[j]}{\lVert\vtX_*\rVert_2}\right)  \times n = \frac{\vtX_*}{\lVert\vtX_*\rVert_2} \div \frac{n}{\lVert \vtX_* \rVert_2} \times n, \label{eq:histogram_recover}
\end{equation}
where $j\in\{1,\hdots,d\}$ represents the index of $\vtX_*$.

If the \ac{MVCS} estimate $\hat{\vtX}$ is close to $\vtX_*/\lVert\vtX_*\rVert_2$, we can approximately recover the full histogram $\vtX_*$ by the process similar to \eqref{eq:histogram_recover}. Since the original histogram has at most $n$ non-zero elements, we will keep the $n$ largest elements and define the resulting vector as $\hat{\vtX}_{th}$.
Then we define the estimated full histogram $\hat{\vtX}_f$ as follows, where we first divide $\hat{\vtX}_{th}$ by the sum of its elements and multiply $n$.
\begin{equation}
    \hat{\vtX}_f=\hat{\vtX}_{th} \frac{n}{\sum_j \hat{x}_{th}[j]}.
\end{equation}

From \eqref{eq:x_hat2}, we observe that $\hat{\vtX}$ has non-integer elements, as do $\hat{\vtX}_{th}$ and $\hat{\vtX}_f$. For $\hat{\vtX}$ close enough to $\vtX_*/\lVert\vtX_*\rVert_2$, we can retrieve the exact histogram by rounding each element of $\hat{\vtX}_f$ to the nearest integer. A sufficient condition for exact histogram estimation by rounding is  
\begin{equation} \label{eq:exact_hist_cond}
    \left\Vert \hat{\vtX}_f - \vtX_* \right\Vert_2^2 = \left\Vert \hat{\vtX}_{th} \frac{n}{\sum_j \hat{x}_{th}[j]} - \vtX_* \right\Vert_2^2 \leq \left( \frac{1}{2} \right)^2
\end{equation}
Under this condition, each element of $\hat{\vtX}_f - \vtX_*$ will have an absolute value less than or equal to $1/2$, which leads the rounded version of $\hat{\vtX}_f$ to be equal to $\vtX_*$. However, note that this condition represents a loose bound on exact histogram estimation and it is possible to recover the exact histogram even when the square error exceeds 1/4. For example, consider the case where $\lVert\vtX_*\rVert_0=n$, $\hat{\vtX}_f$ and $\vtX_*$ share the same support, and 
$\max_j|\hat{x}_f[j] - x_*[j]|=\nu<1/2$. Then, $\lVert \hat{\vtX}_f - \vtX_* \rVert_2^2\leq\nu^2 n$, while still yielding the exact histogram by rounding. The bound $\lVert \hat{\vtX}_f - \vtX_* \rVert_2^2\leq\nu^2 n$ is significantly looser than \eqref{eq:exact_hist_cond} for large $n$.

As mentioned in Theorem~\ref{th:mvcs2}, our \ac{MVCS} protocol can guarantee $\left\Vert \hat{\vtX} - \vtX_*/\Vert \vtX_* \Vert_2 \right\Vert_2 \leq \epsilon$ with high probability. We first bound $\Vert \vtX_* \Vert_2$ using the fact that it's a sum of one-hot vectors, and show that \ac{MVCS} can guarantee $\left\Vert \hat{\vtX}_{th} - \vtX_*/\Vert \vtX_* \Vert_2 \right\Vert_2 \leq \epsilon$ as well for $\epsilon$ that is small enough. 

\begin{lemma} \label{lemma:twonorm_bound}
    \begin{equation}
        \sqrt{n} \leq \lVert\vtX_*\rVert_2 \leq n
    \end{equation}
    \begin{proof}
        For the upper bound,
        \begin{align}
            \Vert \vtX_* \Vert_2 &\leq \Vert \vtX_* \Vert_1 = \left\Vert \sum_{i=1}^{n} \vtX_i \right\Vert_1 \underset{\text{(a)}}{\leq} \sum_{i=1}^{n} \Vert \vtX_i \Vert_1 = n,
        \end{align}
        where (a) is given by the triangle inequality. For the lower bound, let $S$ be a set of indices of non-zero elements of $\vtX_*$.
        Since all elements of $\vtX_*$ are nonnegative and $\vtX_*$ can have at most $n$ non-zero elements, by Cauchy-Schwartz inequality,
        \begin{align}
            \left( \sum_{k \in S} x_* [k] \right)^2 \leq n \sum_{k \in S} x_* [k]^2 = n \Vert \vtX_* \Vert_2^2 \Rightarrow n \leq \Vert \vtX_* \Vert_2^2.
        \end{align}
    \end{proof}
\end{lemma}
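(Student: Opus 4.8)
The plan is to prove the two bounds independently, both times exploiting that $\vtX_*=\sum_{i=1}^n\vtX_i$ is a nonnegative integer vector obtained as a sum of $n$ one-hot vectors. Two structural facts fall out immediately and drive the whole argument: $\NormOne{\vtX_*}=n$ exactly (since all coordinates are nonnegative and each summand contributes exactly one unit of mass), and the support of $\vtX_*$ has cardinality at most $n$ (since $n$ one-hot vectors can activate at most $n$ distinct coordinates).

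For the upper bound, I would chain the elementary inequality $\NormTwo{\vtX_*}\leq\NormOne{\vtX_*}$, which holds for any vector because the squared $\ell_1$-norm dominates the squared $\ell_2$-norm through its nonnegative cross terms, with the triangle inequality applied to the $\ell_1$-norm: $\NormOne{\vtX_*}=\NormOne{\sum_i\vtX_i}\leq\sum_i\NormOne{\vtX_i}=n$, where the final equality uses that each one-hot vector has unit $\ell_1$-norm. This yields $\NormTwo{\vtX_*}\leq n$ at once.

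For the lower bound, I would let $S$ denote the support of $\vtX_*$ and combine the two facts above. Since all coordinates are nonnegative, $\NormOne{\vtX_*}=\sum_{k\in S}x_*[k]=n$, and $\Card{S}\leq n$. Applying Cauchy--Schwarz to the all-ones vector and $\vtX_*$ restricted to $S$ gives $\sum_{k\in S}x_*[k]\leq\sqrt{\Card{S}}\,\NormTwo{\vtX_*}\leq\sqrt{n}\,\NormTwo{\vtX_*}$. Equating the left side with $n$ and rearranging produces $\NormTwo{\vtX_*}\geq\sqrt{n}$.

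In truth there is no serious obstacle here; both estimates follow from standard norm inequalities and the special structure of one-hot vectors. The one point that deserves care is that the lower bound is \emph{not} a consequence of Cauchy--Schwarz alone: it crucially relies on the support bound $\Card{S}\leq n$. This is exactly what couples the dimension factor in the $\ell_1$-to-$\ell_2$ comparison to $n$ rather than to the ambient dimension $d$; without it, the inequality would carry an uncontrolled $\sqrt{d}$ and the claimed bound would be vacuous.
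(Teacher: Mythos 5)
Your proof is correct and matches the paper's argument essentially step for step: the upper bound via $\lVert\vtX_*\rVert_2\leq\lVert\vtX_*\rVert_1\leq\sum_i\lVert\vtX_i\rVert_1=n$, and the lower bound via Cauchy--Schwarz on the support of $\vtX_*$ combined with $\lvert S\rvert\leq n$ and $\sum_{k\in S}x_*[k]=n$. Your closing remark correctly identifies the same point the paper relies on implicitly, namely that the support bound (not Cauchy--Schwarz alone) is what ties the factor to $\sqrt{n}$ rather than $\sqrt{d}$.
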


\begin{lemma} \label{lemma:truncate}
If $0<\epsilon<\frac{1}{2n}$ and $\left\Vert \hat{\vtX} - \vtX_*/\Vert \vtX_* \Vert_2 \right\Vert_2 \leq \epsilon$, then $\left\Vert \hat{\vtX}_{th} - \vtX_*/\Vert \vtX_* \Vert_2 \right\Vert_2 \leq \epsilon$.
\begin{proof}
We first show that, under the conditions of the Lemma, none of the elements of $\hat{\vtX}$ that lie on the support of $\vtX_*$ are truncated to make $\hat{\vtX}_{th}$. In other words, the non-zero elements of $\hat{\vtX} - \hat{\vtX}_{th}$ do not overlap with the support of $\vtX_*$. \\
Without loss of generality, let first $v \in [1, n]$ elements of $\vtX_* / \Vert \vtX_* \Vert_2$ be non-zero and denote them as $w_i (i=1,2, \dots, v)$. All the other elements in $\vtX_* / \Vert \vtX_* \Vert_2$ will be zero. And we define the first $v$ elements of $\hat{\vtX}$ as $y_i (i=1, 2, \dots, v)$, and the remaining $d-v$ elements as $z_i (i=v+1, v+2, \dots, d)$. \\
By squaring both sides of the condition $\left\Vert \hat{\vtX} - \vtX_*/\Vert \vtX_* \Vert_2 \right\Vert_2 \leq \epsilon$, we obtain the following representation.
\begin{equation} \label{eq:l2_loss_expansion}
    \left\Vert \hat{\vtX} - \vtX_*/\Vert \vtX_* \Vert_2 \right\Vert_2^2 = \sum_{i=1}^v (y_i - w_i)^2 + \sum_{i=v+1}^d z_i^2 \leq \epsilon^2 
\end{equation}
Now assume that there exists at least one non-zero element in $\hat{\vtX} - \hat{\vtX}_{th}$ within indices $1$ to $v$, i.e., at least one of $y_i$'s is not truncated. This means there exists at least one $l \in [v+1, d]$ such that $z_l > \min_{i \in [1, v]} y_i$. We'll find the lower bound of $y_i$ to lower bound $z_l$. \\
From \eqref{eq:l2_loss_expansion}, note that $(y_i - w_i)^2 \leq \epsilon^2, \forall i \in [1, v]$. Since $w_i$ can be lower bounded by $1/n$, as the minimum non-zero element in $\vtX_*$ is 1 and the maximum 2-norm of $\vtX_*$ is $n$, we have
\begin{equation}
    |y_i - w_i| \leq \epsilon \Rightarrow w_i - \epsilon \leq y_i \Rightarrow \frac{1}{n} - \epsilon \leq y_i, \forall i \in [1,v].
\end{equation}
Thus $z_l > \frac{1}{n} - \epsilon$ and $z_l^2 > \frac{1}{n^2} - \frac{2\epsilon}{n} + \epsilon^2 > \epsilon^2$, where the last inequality holds since $0 < \epsilon < \frac{1}{2n}$. However, this contradicts \eqref{eq:l2_loss_expansion}, as it makes the left-hand side greater than $\epsilon^2$. By contradiction, $z_l \leq \min_{i \in [1, v]} y_i, \forall l \in [v+1, d]$. \\
Now let $S$ be a set of indices of $n-v$ largest elements among $z_i$. So the non-zero elements of $\hat{\vtX}_{th}$ will be $y_i, i=1, 2, \dots, v$ and $z_i, i \in S$. Then we can represent $\left\Vert \hat{\vtX}_{th} - \vtX_*/\Vert \vtX_* \Vert_2 \right\Vert_2^2$ as follows.
\begin{align}
    \left\Vert \hat{\vtX}_{th} - \vtX_*/\Vert \vtX_* \Vert_2 \right\Vert_2^2 &= \sum_{i=1}^v (y_i - w_i)^2 + \sum_{i \in S} z_i^2 \\
    &\leq \sum_{i=1}^v (y_i - w_i)^2 + \sum_{i=v+1}^d z_i^2 \\
    &=\left\Vert \hat{\vtX} - \vtX_*/\Vert \vtX_* \Vert_2 \right\Vert_2^2 \leq \epsilon^2
\end{align}
\end{proof}
\end{lemma}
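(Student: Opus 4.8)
The plan is to show that the thresholding step which forms $\hat{\vtX}_{th}$ from $\hat{\vtX}$ never deletes a coordinate lying on the support of $\vtX_*$, and to deduce the norm bound from this. The intuition is that $\vtX_*/\Vert\vtX_*\Vert_2$ vanishes off its support, so zeroing out only off-support coordinates of $\hat{\vtX}$ can only move it closer to the target; the genuine risk is that truncation might accidentally discard an on-support coordinate, and the gap hypothesis $\epsilon<1/(2n)$ is exactly what rules this out.

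First I would fix notation. Let $v\leq n$ be the number of nonzero entries of $\vtX_*/\Vert\vtX_*\Vert_2$, write these as $w_1,\dots,w_v$, let $y_1,\dots,y_v$ be the corresponding on-support entries of $\hat{\vtX}$, and let $z_{v+1},\dots,z_d$ be its off-support entries. Squaring the hypothesis gives the error budget
\begin{equation*}
\sum_{i=1}^v (y_i-w_i)^2 + \sum_{i=v+1}^d z_i^2 \leq \epsilon^2,
\end{equation*}
which I will use repeatedly.

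The crux is the claim that every $y_i$ survives thresholding, and this is the step I expect to be the main obstacle. I would argue by contradiction: if some on-support coordinate were dropped, then, because $\hat{\vtX}_{th}$ keeps the $n$ largest entries while there are only $v\leq n$ on-support entries, at least one off-support coordinate $z_l$ must exceed $\min_{i\in[1,v]} y_i$. To lower bound the $y_i$, I invoke Lemma~\ref{lemma:twonorm_bound}: the smallest nonzero entry of $\vtX_*$ is $1$ and $\Vert\vtX_*\Vert_2\leq n$, so each $w_i\geq 1/n$, and the budget gives $|y_i-w_i|\leq\epsilon$, whence $y_i\geq 1/n-\epsilon$. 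Consequently $z_l>1/n-\epsilon$ and therefore $z_l^2>1/n^2-2\epsilon/n+\epsilon^2$; the hypothesis $\epsilon<1/(2n)$ forces $1/n^2-2\epsilon/n>0$, so $z_l^2>\epsilon^2$. This contradicts the budget, since $z_l^2$ is one of the off-support terms bounded by $\epsilon^2$. Hence no on-support coordinate is removed.

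Given the claim, the conclusion is routine. The nonzero entries of $\hat{\vtX}_{th}$ are exactly the $y_i$ together with the $n-v$ largest off-support values, indexed by some set $S$; dropping the remaining off-support coordinates only removes nonnegative terms, so
\begin{equation*}
\left\Vert \hat{\vtX}_{th} - \vtX_*/\Vert\vtX_*\Vert_2 \right\Vert_2^2 = \sum_{i=1}^v (y_i-w_i)^2 + \sum_{i\in S} z_i^2 \leq \sum_{i=1}^v (y_i-w_i)^2 + \sum_{i=v+1}^d z_i^2 \leq \epsilon^2,
\end{equation*}
and taking square roots yields the stated bound.
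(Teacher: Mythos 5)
Your proof is correct and follows essentially the same route as the paper's: the same on-support/off-support decomposition into $w_i$, $y_i$, $z_i$, the same contradiction argument showing no on-support coordinate can be truncated (via $w_i \geq 1/n$, $y_i \geq 1/n - \epsilon$, and $z_l^2 > \epsilon^2$ from $\epsilon < \frac{1}{2n}$), and the same final norm comparison dropping nonnegative off-support terms. No gaps to report.
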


To guarantee the exact histogram estimation, our first step is to express the bound on $\left\Vert \hat{\vtX}_f - \vtX_* \right\Vert_2$ using $\left\Vert \hat{\vtX}_{th} - \vtX_*/\Vert \vtX_* \Vert_2 \right\Vert_2$. Then we can bound this function using $\epsilon$, and ultimately find the condition for $\epsilon$ that ensures \eqref{eq:exact_hist_cond} holds. For a bound of $\epsilon$, we will use the one from Lemma~\ref{lemma:truncate}.

\begin{lemma} \label{lemma:hatx_mag}
    If $0<\epsilon<\frac{1}{2n}$ and $\left\Vert \hat{\vtX}_{th} - \frac{\vtX_*}{\Vert \vtX_* \Vert_2} \right\Vert_2 \leq \epsilon$, then
    \begin{equation} \label{eq:exact_hist_bound}
        \left\Vert \hat{\vtX}_f - \vtX_* \right\Vert_2 \leq \Vert \vtX_* \Vert_2 \epsilon + \frac{\Vert \vtX_* \Vert_2 \sqrt{2n} \epsilon}{\frac{n}{\Vert \vtX_* \Vert_2} - \sqrt{2n}\epsilon}
    \end{equation}
    \begin{proof}
        \begin{align}
            &\left\Vert \hat{\vtX}_f - \vtX_* \right\Vert_2 \\
            &= \left\Vert \hat{\vtX}_{th} \frac{n}{\sum_j \hat{x}_{th}[j]} - \vtX_* \right\Vert_2 \\
            &= \left\Vert \left(\hat{\vtX}_{th}\frac{n}{\sum_j \hat{x}_{th}[j]} - \hat{\vtX}_{th}\lVert\vtX_*\rVert_2\right) + \left(\hat{\vtX}_{th} \lVert\vtX_*\rVert_2 - \vtX_*\right) \right\Vert_2 \\
            &\leq \left\Vert \hat{\vtX}_{th} \left( \frac{n}{\sum_j \hat{x}_{th}[j]} - \Vert \vtX_* \Vert_2 \right) \right\Vert_2 + \big\Vert \Vert \vtX_* \Vert_2\hat{\vtX}_{th} - \vtX_* \big\Vert_2 \label{eq:triangle_ineq},
        \end{align}
        where the inequality holds due to the triangle inequality. Now we bound each term in \eqref{eq:triangle_ineq}. We begin with the second term, utilizing the second condition of the Lemma,
        \begin{equation} \label{eq:alg_guarantee}
            \bigg\Vert \Vert \vtX_* \Vert_2\hat{\vtX}_{th} - \vtX_* \bigg\Vert_2 = \Vert \vtX_* \Vert_2 \left\Vert \hat{\vtX}_{th} - \frac{\vtX_*}{\Vert \vtX_* \Vert_2} \right\Vert_2             \leq \Vert \vtX_* \Vert_2 \epsilon.
        \end{equation}
        Then, for the first term in \eqref{eq:triangle_ineq}, we have
        \begin{align} \label{eq:2norm_abs}
            \left\Vert \hat{\vtX}_{th} \left( \frac{n}{\sum_j \hat{x}_{th}[j]} - \Vert \vtX_* \Vert_2 \right) \right\Vert_2 = \Vert \hat{\vtX}_{th} \Vert_2 \left| \frac{n}{\sum_j \hat{x}_{th}[j]} - \Vert \vtX_* \Vert_2 \right|.
        \end{align}
        We go on to bound $\Vert \hat{\vtX}_{th} \Vert_2$ as
        \begin{equation} \label{eq:2norm_hatx}
            \Vert \hat{\vtX}_{th} \Vert_2 \leq \Vert \hat{\vtX} \Vert_2 = 1,
        \end{equation}
        where the inequality comes from the definition of $\hat{\vtX}_{th}$ and the equality holds due to \eqref{eq:x_hat2}\footnote{\eqref{eq:x_hat2} can also yield $\Vert \hat{\vtX} \Vert_2 = 0$ but that case is excluded due to our assumption of $\epsilon < 1/\sqrt{2n}$.}. 
        
        To bound the factor $\left| \frac{n}{\sum_j \hat{x}_{th}[j]} - \Vert \vtX_* \Vert_2 \right|$ in \eqref{eq:2norm_abs}, we consider the following expression to bound $\sum_j \hat{x}_{th}[j]$ as a function of $\Vert \vtX_* \Vert_2$.
        \begin{align}
            &\left| \sum_j \left( \hat{x}_{th}[j] - \frac{x_*[j]}{\Vert \vtX_* \Vert_2} \right) \right| \leq \left\Vert \hat{\vtX}_{th} - \frac{\vtX_*}{\Vert \vtX_* \Vert_2} \right\Vert_1 \\
            & \underset{(a)}{\leq} \sqrt{2n} \left\Vert \hat{\vtX}_{th} - \frac{\vtX_*}{\Vert \vtX_* \Vert_2} \right\Vert_2  \underset{(b)}{\leq} \sqrt{2n}\epsilon,
        \end{align}
        where $(a)$ holds due to the Cauchy-Schwarz inequality as $\hat{\vtX}_{th} - \frac{\vtX_*}{\Vert \vtX_* \Vert_2}$ will have at most $2n$ non-zero elements, and $(b)$ holds due to the second condition of the Lemma.
        Since $\sum_j x_* [j] = n$, we have
        \begin{equation} \label{eq:bound_sum_hatx}
            \frac{n}{\Vert \vtX_* \Vert_2} - \sqrt{2n}\epsilon \leq \sum_j \hat{x}_{th}[j] \leq \frac{n}{\Vert \vtX_* \Vert_2} + \sqrt{2n}\epsilon.
        \end{equation}
        Note that $\frac{n}{\Vert \vtX_* \Vert_2} - \sqrt{2n}\epsilon > 0$
        since $\frac{n}{\Vert \vtX_* \Vert_2} \geq 1$ from Lemma~\ref{lemma:twonorm_bound} and we assumed $\epsilon$ to be less then $\frac{1}{2n}$. Plugging in \eqref{eq:bound_sum_hatx} to $\frac{n}{\sum_j \hat{x}_{th}[j]} - \Vert \vtX_* \Vert_2$, we get
        \begin{align}
            &-\frac{\Vert \vtX_* \Vert_2 \sqrt{2n} \epsilon}{\frac{n}{\Vert \vtX_* \Vert_2} + \sqrt{2n}\epsilon} \leq \frac{n}{\sum_j \hat{x}_{th}[j]} - \Vert \vtX_* \Vert_2 \leq \frac{\Vert \vtX_* \Vert_2 \sqrt{2n} \epsilon}{\frac{n}{\Vert \vtX_* \Vert_2} - \sqrt{2n}\epsilon} \\
            &\Rightarrow \left| \frac{n}{\sum_j \hat{x}_{th}[j]} - \Vert \vtX_* \Vert_2 \right| \leq \frac{\Vert \vtX_* \Vert_2 \sqrt{2n} \epsilon}{\frac{n}{\Vert \vtX_* \Vert_2} - \sqrt{2n}\epsilon}. \label{eq:abs_coeff}
        \end{align}
        Combining \eqref{eq:triangle_ineq}-\eqref{eq:2norm_hatx} and \eqref{eq:abs_coeff}, we can derive the proposed bound.
        
    \end{proof}
\end{lemma}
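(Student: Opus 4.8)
The plan is to estimate $\Vert\hat{\vtX}_f - \vtX_*\Vert_2$ by inserting the intermediate vector $\Vert\vtX_*\Vert_2\hat{\vtX}_{th}$ and splitting with the triangle inequality. Writing
\[
\hat{\vtX}_f - \vtX_* = \hat{\vtX}_{th}\left(\frac{n}{\sum_j\hat{x}_{th}[j]} - \Vert\vtX_*\Vert_2\right) + \left(\Vert\vtX_*\Vert_2\hat{\vtX}_{th} - \vtX_*\right),
\]
the second parenthesis is immediately handled by the hypothesis: factoring out $\Vert\vtX_*\Vert_2$ turns $\Vert\,\Vert\vtX_*\Vert_2\hat{\vtX}_{th}-\vtX_*\Vert_2$ into $\Vert\vtX_*\Vert_2\,\Vert\hat{\vtX}_{th}-\vtX_*/\Vert\vtX_*\Vert_2\Vert_2 \le \Vert\vtX_*\Vert_2\,\epsilon$. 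So the real work lies entirely in the first, scalar-times-vector term.

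For that term I would factor out the scalar, yielding $\Vert\hat{\vtX}_{th}\Vert_2\,\big|\,n/\sum_j\hat{x}_{th}[j] - \Vert\vtX_*\Vert_2\,\big|$, and bound the two factors separately. The vector factor is easy: since $\hat{\vtX}_{th}$ retains only a subset of the entries of the unit vector $\hat{\vtX}$ from \eqref{eq:x_hat2}, we get $\Vert\hat{\vtX}_{th}\Vert_2 \le \Vert\hat{\vtX}\Vert_2 = 1$ (the degenerate $\hat{\vtX}=\vtZero$ case is excluded by the smallness of $\epsilon$). The scalar factor is the crux. To bound it, the idea is first to pin down $\sum_j\hat{x}_{th}[j]$ near its target value $n/\Vert\vtX_*\Vert_2$, using $\sum_j x_*[j]=n$. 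I would express the difference of sums coordinatewise, bound it by $\Vert\hat{\vtX}_{th}-\vtX_*/\Vert\vtX_*\Vert_2\Vert_1$, and convert $\ell_1$ to $\ell_2$ by Cauchy--Schwarz. The observation that makes this sharp is that $\hat{\vtX}_{th}-\vtX_*/\Vert\vtX_*\Vert_2$ has at most $2n$ nonzero coordinates (at most $n$ from the truncated estimate, at most $n$ from the one-hot sum $\vtX_*$), so the conversion factor is $\sqrt{2n}$ and $\big|\sum_j\hat{x}_{th}[j]-n/\Vert\vtX_*\Vert_2\big|\le\sqrt{2n}\,\epsilon$. This gives the two-sided bound $n/\Vert\vtX_*\Vert_2-\sqrt{2n}\,\epsilon \le \sum_j\hat{x}_{th}[j] \le n/\Vert\vtX_*\Vert_2+\sqrt{2n}\,\epsilon$.

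The last step, and the main obstacle, is converting this additive control of the denominator $\sum_j\hat{x}_{th}[j]$ into control of its reciprocal. I would first confirm the denominator stays positive: by Lemma~\ref{lemma:twonorm_bound} we have $n/\Vert\vtX_*\Vert_2\ge 1$, and since $\epsilon<1/(2n)$ forces $\sqrt{2n}\,\epsilon<1/\sqrt{2n}<1$, the quantity $n/\Vert\vtX_*\Vert_2-\sqrt{2n}\,\epsilon$ is strictly positive, so the reciprocal is well-behaved. Substituting the two-sided bound into $n/\sum_j\hat{x}_{th}[j]-\Vert\vtX_*\Vert_2$ and simplifying (using $n/(n/\Vert\vtX_*\Vert_2)=\Vert\vtX_*\Vert_2$), the extremal values collapse to $\pm\Vert\vtX_*\Vert_2\sqrt{2n}\,\epsilon/(n/\Vert\vtX_*\Vert_2\mp\sqrt{2n}\,\epsilon)$, and taking the larger magnitude yields $\big|\,n/\sum_j\hat{x}_{th}[j]-\Vert\vtX_*\Vert_2\,\big|\le \Vert\vtX_*\Vert_2\sqrt{2n}\,\epsilon/(n/\Vert\vtX_*\Vert_2-\sqrt{2n}\,\epsilon)$. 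Multiplying by $\Vert\hat{\vtX}_{th}\Vert_2\le 1$ and adding the second-term bound $\Vert\vtX_*\Vert_2\,\epsilon$ reproduces exactly \eqref{eq:exact_hist_bound}. The delicate part is this reciprocal step: keeping the sign of the deviation straight and confirming the denominator's positivity (so the bound does not blow up) is where I expect the argument to require the most care.
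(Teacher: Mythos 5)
Your proposal is correct and follows essentially the same route as the paper's proof: the identical decomposition via the intermediate vector $\Vert\vtX_*\Vert_2\hat{\vtX}_{th}$, the bound $\Vert\hat{\vtX}_{th}\Vert_2\leq\Vert\hat{\vtX}\Vert_2=1$, the $\ell_1$-to-$\ell_2$ conversion by Cauchy--Schwarz using the at-most-$2n$-sparse support of $\hat{\vtX}_{th}-\vtX_*/\Vert\vtX_*\Vert_2$, and the same reciprocal step with positivity of the denominator secured by Lemma~\ref{lemma:twonorm_bound} and $\epsilon<1/(2n)$. You even correctly identify that the worst case corresponds to the smaller denominator $n/\Vert\vtX_*\Vert_2-\sqrt{2n}\,\epsilon$, matching \eqref{eq:abs_coeff} exactly.
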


\begin{corollary} \label{cor:epsilon}
    Condition \eqref{eq:exact_hist_cond} for exact histogram estimation holds if we set $\epsilon$ in Lemma \ref{lemma:hatx_mag} as
    \begin{equation} \label{eq:theoretical_eps}
    \epsilon = \frac{1}{2\sqrt{2} n \sqrt{n} + 2n + \sqrt{2n}}.
    \end{equation}
\begin{proof}
    Since the bound in Lemma \ref{lemma:hatx_mag} is increasing for $\sqrt{n} \leq \Vert \vtX_* \Vert_2 \leq n$, it is maximum if $\Vert \vtX_* \Vert_2 = n$. By setting $\Vert \vtX_* \Vert_2 = n$, we can derive the following bound.
    \begin{equation} \label{eq:norm_bound_2norm}
            \left\Vert \hat{\vtX}_f - \vtX_* \right\Vert_2 \leq n \epsilon + \frac{\sqrt{2} n \sqrt{n} \epsilon}{1 - \sqrt{2n} \epsilon}.
        \end{equation}
    To induce a simplified form, we use a slightly weaker version.
    \begin{equation}
    \left\Vert \hat{\vtX}_f - \vtX_* \right\Vert_2 \leq n \epsilon + (\epsilon + 1) \frac{\sqrt{2} n \sqrt{n} \epsilon}{1 - \sqrt{2n} \epsilon}.
    \end{equation}
    By setting the right-hand side equal to $1/2$, we can derive the expression for $\epsilon$.
\end{proof}
\end{corollary}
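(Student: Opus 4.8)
The plan is to start from the bound in Lemma~\ref{lemma:hatx_mag} and reduce it to a single inequality in $\epsilon$ that can be solved in closed form. First I would argue that the right-hand side of \eqref{eq:exact_hist_bound} is monotonically increasing in $\Vert\vtX_*\Vert_2$ over the admissible range $[\sqrt{n}, n]$ given by Lemma~\ref{lemma:twonorm_bound}. Writing $s \triangleq \Vert\vtX_*\Vert_2$, the bound can be put in the form $s\epsilon + \frac{s^2\sqrt{2n}\,\epsilon}{n - s\sqrt{2n}\,\epsilon}$; the first term grows with $s$, and in the second term the numerator increases while the denominator decreases in $s$, so the whole expression is increasing, provided the denominator stays positive. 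The latter holds because the hypothesis $\epsilon < 1/(2n)$ gives $\epsilon < 1/\sqrt{2n}$ for $n\geq 1$, so $n - s\sqrt{2n}\,\epsilon > 0$ throughout. Hence the worst case is $s = n$, which yields exactly the bound \eqref{eq:norm_bound_2norm}.

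Next, to obtain a solvable equation I would relax \eqref{eq:norm_bound_2norm} by inserting the factor $(\epsilon+1)\geq 1$ in front of the fraction, giving $\Vert\hat{\vtX}_f - \vtX_*\Vert_2 \leq n\epsilon + (\epsilon+1)\frac{\sqrt{2}n\sqrt{n}\,\epsilon}{1 - \sqrt{2n}\,\epsilon}$. This particular relaxation is engineered rather than arbitrary: after setting the bound equal to $1/2$ and clearing the denominator $1 - \sqrt{2n}\,\epsilon$, the extra factor contributes a term $+\sqrt{2}n\sqrt{n}\,\epsilon^2$ that exactly cancels the term $-n\sqrt{2n}\,\epsilon^2$ arising from expanding $n\epsilon(1-\sqrt{2n}\,\epsilon)$, precisely because $\sqrt{2}\,n\sqrt{n} = n\sqrt{2n}$. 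With the quadratic terms gone, what remains is linear in $\epsilon$ and solves to $\epsilon = 1/(2\sqrt{2}n\sqrt{n} + 2n + \sqrt{2n})$, the stated value.

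Finally, since this choice of $\epsilon$ forces the relaxed (hence larger) bound to equal $1/2$, it a fortiori forces the unrelaxed bound of Lemma~\ref{lemma:hatx_mag}, and therefore $\Vert\hat{\vtX}_f - \vtX_*\Vert_2 \leq 1/2$, which is exactly condition \eqref{eq:exact_hist_cond}. I would also check the consistency requirement $\epsilon < 1/(2n)$ needed to invoke Lemmas~\ref{lemma:truncate} and~\ref{lemma:hatx_mag}: since the denominator $2\sqrt{2}n\sqrt{n} + 2n + \sqrt{2n}$ exceeds $2n$, the chosen $\epsilon$ is indeed smaller than $1/(2n)$, so the whole chain of lemmas applies.

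The main obstacle I expect is bookkeeping rather than conceptual depth: verifying the monotonicity in $s$ carefully (in particular that the denominator never vanishes on $[\sqrt{n},n]$), and then carrying out the denominator-clearing so that the $\epsilon^2$ terms cancel. The crux is recognizing the algebraic coincidence $\sqrt{2}\,n^{3/2} = n\sqrt{2n}$ and understanding that the $(\epsilon+1)$ weakening is chosen exactly to exploit it; without this step one would be left with a genuine quadratic in $\epsilon$ and a far messier closed form, so identifying and using this relation is what delivers the clean stated value.
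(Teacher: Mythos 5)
Your proposal is correct and follows the paper's own proof essentially step for step: the same monotonicity argument in $\Vert\vtX_*\Vert_2$ reducing to the worst case $\Vert\vtX_*\Vert_2 = n$, the same $(\epsilon+1)$ relaxation of \eqref{eq:norm_bound_2norm}, and the same step of setting the relaxed bound to $1/2$ and solving the resulting linear equation for $\epsilon$. Your additional verifications --- that the $(\epsilon+1)$ factor is engineered so the quadratic terms cancel via $\sqrt{2}\,n\sqrt{n} = n\sqrt{2n}$, and that the resulting $\epsilon$ indeed satisfies $\epsilon < 1/(2n)$ so Lemmas \ref{lemma:truncate} and \ref{lemma:hatx_mag} apply --- are details the paper leaves implicit, and they check out.
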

\begin{remark}
    It is possible to directly set the right-hand side of Lemma \ref{lemma:hatx_mag} equal to $1/2$ and derive the expression for $\epsilon$. It will still be $O(1/n\sqrt{n})$, which is the same as the result of Corollary~\ref{cor:epsilon}.
\end{remark}
Thus, if we set $\epsilon$ in Lemma \ref{lemma:hatx_mag} as in \eqref{eq:theoretical_eps}, our \ac{MVCS} protocol can retrieve the histogram exactly with high probability.


Next, we derive the number of measurements $T$ that can guarantee \eqref{eq:exact_hist_cond} with high probability. 
From \eqref{eq:communication_cost}, we have a bound on $T$ which is proportional to the reciprocal of $\lambda^2$, where $\lambda$ represents the correlation between the 1-bit measurements and the original measurement as defined in \eqref{eq:lambda_def_2}. Since we aim to find the lower bound for the number of measurements $T$, we should find the lower bound of $\lambda$. 

To do this, we first state our conjecture based on empirical results that $\lambda$ can be lower bounded by $\lambda_g$. As will be described later in the proof of Lemma~\ref{lemma:lambda_g}, $\lambda_g$ is equivalent to $\lambda$ if $n=1$, allowing us to compute $\lambda_g$ empirically by setting $n=1$. Under various combinations of parameters ($d$, $\overline{P}$, $l_{\min}$, $\sigma_h$, $\sigma_z$), we computed both $\lambda$ (for general $n$) and $\lambda_g$ (for $n=1$) using Monte Carlo simulations and observed that $\lambda$ is consistently lower bounded by $\lambda_g$.

\begin{conjecture}
\begin{equation}
    \lambda = \mathbb{E}\left[\theta(\langle \vtM_t,\vtX_* \rangle) \langle \vtM_t,\vtX_* \rangle \right] \geq \mathbb{E}_g\left[\theta(g)g\right] = \lambda_g,
\end{equation}
    where $g$ is a standard Gaussian \ac{RV}.
\end{conjecture}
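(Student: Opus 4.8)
The plan is to make both $\lambda$ and $\lambda_g$ fully explicit via the closed form of $\theta$ from Theorem~\ref{th:mvcs}, and then reduce the claim to a scalar inequality about half-normal variables. Writing $G_i \triangleq \langle \vtM_t,\vtX_i\rangle$ and $\kappa \triangleq 2\sigma_z^2/(E^2\sigma_h^2)$, the formula for $\theta$ gives $\theta(\langle\vtM_t,\vtX_*\rangle)\langle\vtM_t,\vtX_*\rangle = \big(\sum_i G_i\big)^2 / \big(\sum_i |G_i| + \kappa\big)$, so that
\begin{equation}
    \lambda = \mathbb{E}\left[\frac{\left(\sum_{i=1}^n G_i\right)^2}{\sum_{i=1}^n |G_i| + \kappa}\right], \qquad \lambda_g = \mathbb{E}\left[\frac{g^2}{|g| + \kappa}\right],
\end{equation}
the latter being exactly the $n=1$ instance. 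Since $\vtM_t$ has i.i.d.\ standard Gaussian entries and each $\vtX_i$ has unit norm, $(G_1,\dots,G_n)$ is a zero-mean Gaussian vector with $\mathbb{E}[G_iG_j]=\langle\vtX_i,\vtX_j\rangle$; in the histogram setting the $\vtX_i$ are one-hot, so devices sharing an item are perfectly correlated ($G_i=G_j$) and devices with distinct items are independent.

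I would next collapse each group of devices holding a common item into one representative, leaving $m$ independent standard Gaussians $g_{(1)},\dots,g_{(m)}$ with multiplicities $n_1,\dots,n_m$ ($\sum_j n_j=n$), whence $\lambda=\mathbb{E}\big[(\sum_j n_j g_{(j)})^2/(\sum_j n_j|g_{(j)}|+\kappa)\big]$. Conditioning on the magnitudes $\rho_j\triangleq|g_{(j)}|$ and using that the signs $\sign(g_{(j)})$ are i.i.d.\ uniform on $\{\pm1\}$ and independent of the magnitude-only denominator, every cross term in the expansion of the squared numerator vanishes and $\mathbb{E}\big[(\sum_j n_j\epsilon_j\rho_j)^2\mid\{\rho_j\}\big]=\sum_j n_j^2\rho_j^2$. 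This reduces the conjecture to the scalar inequality
\begin{equation}
    \mathbb{E}\left[\frac{\sum_{j=1}^m n_j^2\rho_j^2}{\sum_{j=1}^m n_j\rho_j + \kappa}\right] \ \geq\ \mathbb{E}\left[\frac{\rho^2}{\rho+\kappa}\right],
\end{equation}
where $\rho,\rho_1,\dots,\rho_m$ are i.i.d.\ half-normal.

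Two extremes support the claim and suggest the method. The single-group case ($m=1$, $n_1=n$) holds pointwise, since $n^2\rho^2/(n\rho+\kappa)\geq\rho^2/(\rho+\kappa)$ is equivalent to $n^2(\rho+\kappa)\geq n\rho+\kappa$. The fully distinct case ($n_j\equiv1$, $m=n$) is the crux, and here I would use a leave-one-out/symmetry identity: with $D_1=\rho_1+\kappa$ and $W=\sum_{j\geq2}\rho_j$, symmetry of the $n$ terms rewrites the target as $\mathbb{E}\big[\tfrac{\rho_1^2}{nD_1}\big(nD_1\,\mathbb{E}_W[\tfrac{1}{D_1+W}]-1\big)\big]\geq0$, so it suffices to control the inner factor $nD_1\,\mathbb{E}_W[1/(D_1+W)]-1$. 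By Jensen applied to the convex map $w\mapsto1/(D_1+w)$, this factor is nonnegative whenever $D_1\geq\mathbb{E}[\rho]=\sqrt{2/\pi}$, covering in particular every $\kappa\geq\sqrt{2/\pi}$; for small $\rho_1$ the factor may be negative, but such contributions are damped by the $\rho_1^2$ weight.

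The main obstacle is exactly this small-$\kappa$, balanced-weight regime, where no pointwise bound closes the gap and a genuinely global argument is required. To finish, my plan is the Laplace representation $1/x=\int_0^\infty e^{-tx}\,dt$, which turns the difference of the two sides (in the distinct case) into $\int_0^\infty e^{-t\kappa}\phi''(t)\big[\,n\phi(t)^{n-1}-1\,\big]\,dt$, where $\phi(t)=\mathbb{E}[e^{-t\rho}]$ is the half-normal Laplace transform. Since $\phi$ decreases strictly from $1$, the bracket $n\phi^{n-1}-1$ is positive then negative with a single crossing, while the weight $e^{-t\kappa}\phi''(t)=e^{-t\kappa}\mathbb{E}[\rho^2e^{-t\rho}]$ is strictly positive; establishing that this integral is nonnegative (equivalently, that the scalar quantity is monotone in $n$) is the one genuinely delicate step, and I would expect to need a careful weighting of the mass of $\phi$ around the crossing rather than single-crossing alone. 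The limit $\kappa\to0$ is a useful consistency check, as it collapses to $\mathbb{E}\big[\sum_j\rho_j^2/\sum_j\rho_j\big]\geq\mathbb{E}[\rho]$, which holds because the contraharmonic mean dominates the arithmetic mean.
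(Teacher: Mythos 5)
Before comparing, note the baseline: the paper does \emph{not} prove this statement. It is explicitly labeled a conjecture and is justified only by Monte Carlo simulation over various parameter combinations; the only analytic content nearby is Lemma~\ref{lemma:lambda_g}, which computes $\lambda_g$ in closed form for $n=1$. So your proposal is attempting strictly more than the paper does, and most of it is sound. The reduction is correct in the histogram setting: the closed form of $\theta$ from Theorem~\ref{th:mvcs} gives $\lambda=\mathbb{E}\big[(\sum_i G_i)^2/(\sum_i|G_i|+\kappa)\big]$ with $\kappa=2\sigma_z^2/(E^2\sigma_h^2)$, grouping devices that share an item into $m$ independent standard Gaussians with multiplicities $n_j$ is legitimate because $\mathbb{E}[G_iG_j]=\langle\vtX_i,\vtX_j\rangle\in\{0,1\}$ for one-hot messages, and the Rademacher sign-symmetrization that collapses the squared numerator to $\sum_j n_j^2\rho_j^2$ is valid since signs and magnitudes of independent symmetric Gaussians are independent. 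One point worth making explicit: your restriction to the one-hot setting is not merely convenient but necessary — for general real-valued unit-norm messages the conjecture is false (take $n=2$, $\vtX_2=-\vtX_1$, so $\vtF=\vtZero$ and $\lambda=0<\lambda_g$) — so the statement should be read, as you do, in its histogram context.

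The genuine gap is exactly where you flag it: a single sign change of $n\phi^{n-1}(t)-1$ does not determine the sign of $\int_0^\infty e^{-t\kappa}\phi''(t)\big[n\phi^{n-1}(t)-1\big]\,dt$, and the claim that small-$\rho_1$ contributions are ``damped by the $\rho_1^2$ weight'' is heuristic, so as written the proof does not close (there is also a harmless spurious factor $1/n$ in your leave-one-out identity). However, the Laplace detour is unnecessary: your reduced inequality holds \emph{pointwise}. For any nonnegative reals $\sigma_1,\dots,\sigma_m$ and $\kappa\geq0$, with $S=\sum_j\sigma_j$,
\begin{equation}
    \frac{\sum_{j=1}^m \sigma_j^2}{S+\kappa} \;\geq\; \frac{1}{m}\sum_{j=1}^m\frac{\sigma_j^2}{\sigma_j+\kappa},
\end{equation}
because after clearing denominators it reads $\sum_j \sigma_j^2\,\frac{m\sigma_j-S+(m-1)\kappa}{\sigma_j+\kappa}\geq0$; each summand is nondecreasing in $\kappa$ (its $\kappa$-derivative is $\sigma_j^2(S-\sigma_j)/(\sigma_j+\kappa)^2\geq0$), and at $\kappa=0$ the sum equals $m\sum_j\sigma_j^2-S^2\geq0$ by Cauchy--Schwarz. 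Apply this with $\sigma_j=n_j\rho_j$, then use that $x\mapsto x^2/(x+\kappa)$ is nondecreasing on $x\geq0$ to get $\frac{(n_j\rho_j)^2}{n_j\rho_j+\kappa}\geq\frac{\rho_j^2}{\rho_j+\kappa}$ (this is your single-group computation), and take expectations over the i.i.d.\ half-normal $\rho_j$: the right-hand side averages to $\mathbb{E}\big[\rho^2/(\rho+\kappa)\big]=\lambda_g$, giving $\lambda\geq\lambda_g$ for every multiplicity profile at once, with both of your extreme cases as special instances. With this elementary patch in place of the unfinished transform argument, your outline upgrades the paper's empirically supported conjecture to a theorem in the histogram setting, which would also put the $\lambda_*$ lower bound used in Theorem~\ref{th:T_bound} on rigorous footing.
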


Since $\lambda_g$ has a simpler form than $\lambda$ for general $n$, we can derive a closed form expression for $\lambda_g=\mathbb{E}_g\left[\theta(g)g\right]$.

\begin{lemma} \label{lemma:lambda_g}
    $\mathbb{E}_g\left[\theta(g)g\right]$ can be represented as
    \begin{equation} \label{eq:lambda_bound}
        \sqrt{\frac{2}{\pi}}-q+q^2\frac{e^{-q^2/2}\left(\pi \textrm{erfi}\left(\frac{q}{\sqrt{2}}\right) - \textrm{Ei}\left(\frac{q^2}{2}\right)\right)}{\sqrt{2\pi}},
    \end{equation}
    where $q=\frac{2\sigma_z^2}{E^2\sigma_h^2}$, erfi denotes the imaginary error function and Ei denotes the exponential integral.
\begin{proof}
    Note that if $n=1$, $\vtX_*=\vtX_1$. Since $\vtM_t$ is a standard Gaussian random vector and $\lVert\vtX_1\rVert_2 =1$, $\langle\vtM_t,\vtX_*\rangle$ follows a standard Gaussian distribution. Thus $\lambda = \mathbb{E}\left[\theta(\langle\vtM_t,\vtX_*\rangle)\langle\vtM_t,\vtX_*\rangle\right]$ is equivalent to  $\lambda_g = \mathbb{E}_g\left[\theta(g)g\right]$ if $n=1$, and we can calculate $\mathbb{E}_g\left[\theta(g)g\right]$ as follows.
    \begin{align}
        &\mathbb{E}_g\left[\theta(g)g\right]\\
        &=\mathbb{E}\left[\theta(\langle\vtM_t,\vtX_*\rangle)\langle\vtM_t,\vtX_*\rangle\right]\\
        &=\mathbb{E}\left[\frac{E^2\sigma_h^2\langle\vtM_t,\vtX_*\rangle^2}{E^2\sigma_h^2\left|\langle\vtM_t,\vtX_1\rangle\right| + 2\sigma_z^2}\right]\\
        &=\mathbb{E}_g\left[\frac{E^2\sigma_h^2 g^2}{E^2\sigma_h^2\left|g\right| + 2\sigma_z^2}\right]\\
        &=\mathbb{E}_g\left[\left|g\right| - \frac{2\sigma_z^2}{E^2\sigma_h^2}+\left(\frac{2\sigma_z^2}{E^2\sigma_h^2}\right)^2\frac{1}{\left|g\right| + \frac{2\sigma_z^2}{E^2\sigma_h^2}}\right]\\
        &=\sqrt{\frac{2}{\pi}}-\frac{2\sigma_z^2}{E^2\sigma_h^2}+\left(\frac{2\sigma_z^2}{E^2\sigma_h^2}\right)^2\mathbb{E}_g\left[\frac{1}{\left|g\right| + \frac{2\sigma_z^2}{E^2\sigma_h^2}}\right] \label{eq:lambda}
    \end{align}
    By using erfi and Ei, we can calculate the expectation in \eqref{eq:lambda} and induce the expression in the lemma.
\end{proof}
\end{lemma}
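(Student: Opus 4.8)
The plan is to reduce the claim to a single one-dimensional Gaussian integral and then evaluate that integral by turning it into a first-order linear ODE. First I would use the $n=1$ specialization: when $n=1$ we have $\vtX_*=\vtX_1$ with unit $\ell_2$-norm, so $\langle\vtM_t,\vtX_1\rangle$ is a standard Gaussian $g$, and the general $\theta$ from Theorem~\ref{th:mvcs} collapses to the scalar map $\theta(g)=E^2\sigma_h^2 g/(E^2\sigma_h^2|g|+2\sigma_z^2)$. Writing $q\triangleq 2\sigma_z^2/(E^2\sigma_h^2)$, the target becomes $\mathbb{E}_g[\theta(g)g]=\mathbb{E}_g\left[g^2/(|g|+q)\right]$. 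Using $g^2=|g|^2$ and the division $|g|^2=(|g|+q)(|g|-q)+q^2$, I get $\frac{g^2}{|g|+q}=|g|-q+\frac{q^2}{|g|+q}$, so the problem splits into $\mathbb{E}_g[|g|]=\sqrt{2/\pi}$ (the half-normal mean) and the single nontrivial expectation $\mathbb{E}_g\left[1/(|g|+q)\right]$.

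The crux is $\mathbb{E}_g\left[1/(|g|+q)\right]=\sqrt{2/\pi}\int_0^\infty e^{-g^2/2}/(g+q)\,dg$. Substituting $g=\sqrt2\,t$ turns this into $\sqrt{2/\pi}\,F(a)$ with $a\triangleq q/\sqrt2$ and $F(a)\triangleq\int_0^\infty e^{-t^2}/(t+a)\,dt$. I would obtain a closed form for $F$ by differentiating under the integral and integrating by parts: from $-F'(a)=\int_0^\infty e^{-t^2}/(t+a)^2\,dt$, integrating by parts with $v=-1/(t+a)$ gives $-F'(a)=\tfrac1a-2\int_0^\infty t\,e^{-t^2}/(t+a)\,dt$, and splitting $t/(t+a)=1-a/(t+a)$ yields $\int_0^\infty t\,e^{-t^2}/(t+a)\,dt=\tfrac{\sqrt\pi}{2}-aF(a)$. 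Combining these produces the linear ODE $F'(a)+2aF(a)=\sqrt\pi-1/a$.

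Solving with the integrating factor $e^{a^2}$ gives $\frac{d}{da}\!\left(e^{a^2}F(a)\right)=e^{a^2}\!\left(\sqrt\pi-1/a\right)$, and recognizing $\int e^{a^2}\,da=\tfrac{\sqrt\pi}{2}\mathrm{erfi}(a)$ and $\int e^{a^2}/a\,da=\tfrac12\mathrm{Ei}(a^2)$ integrates to $e^{a^2}F(a)=\tfrac{\pi}{2}\mathrm{erfi}(a)-\tfrac12\mathrm{Ei}(a^2)+C$. The constant $C$ is fixed to $0$ by matching the $a\to\infty$ asymptotics: the integral expansion $F(a)\sim\tfrac{\sqrt\pi}{2a}-\tfrac{1}{2a^2}+\cdots$ agrees term-by-term with that of $e^{-a^2}\!\left(\tfrac\pi2\mathrm{erfi}(a)-\tfrac12\mathrm{Ei}(a^2)\right)$, so the bounded constant must vanish. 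Back-substituting $a=q/\sqrt2$, using $\sqrt{2/\pi}\cdot\tfrac12=1/\sqrt{2\pi}$, and reassembling $\sqrt{2/\pi}-q+q^2\,\mathbb{E}_g\left[1/(|g|+q)\right]$ yields precisely the stated expression.

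The step I expect to be the main obstacle is the evaluation of $F(a)$, and within it the correct determination of the integration constant: the ODE only pins $F$ down up to a homogeneous term $Ce^{-a^2}$, so one must verify carefully through the asymptotic expansions of $F$ and of $\mathrm{erfi},\mathrm{Ei}$ that $C=0$, since an overlooked constant would silently corrupt the closed form. An alternative is to quote the tabulated value of $\int_0^\infty e^{-t^2}/(t+a)\,dt$ directly, but I would prefer the self-contained ODE derivation precisely because it makes the vanishing of the constant transparent.
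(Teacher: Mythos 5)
Your reduction coincides exactly with the paper's proof up to \eqref{eq:lambda}: the same $n=1$ specialization making $\langle\vtM_t,\vtX_*\rangle$ a standard Gaussian, the same division $g^2/(|g|+q)=|g|-q+q^2/(|g|+q)$ with $q=2\sigma_z^2/(E^2\sigma_h^2)$, and the same half-normal mean $\mathbb{E}_g[|g|]=\sqrt{2/\pi}$. The paper stops at \eqref{eq:lambda} and merely asserts that the remaining expectation ``can be calculated'' using $\mathrm{erfi}$ and $\mathrm{Ei}$; your ODE derivation of $F(a)=\int_0^\infty e^{-t^2}/(t+a)\,dt$ supplies precisely the omitted computation. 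The differentiation under the integral, the integration by parts giving $F'(a)+2aF(a)=\sqrt{\pi}-1/a$, the antiderivatives $\int e^{a^2}\,da=\tfrac{\sqrt{\pi}}{2}\mathrm{erfi}(a)$ and $\int e^{a^2}/a\,da=\tfrac{1}{2}\mathrm{Ei}(a^2)$, and the reassembly via $\sqrt{2/\pi}\cdot\tfrac12=1/\sqrt{2\pi}$ are all correct and reproduce \eqref{eq:lambda_bound}.

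However, the step you yourself flagged as the main obstacle is flawed as you propose to execute it: term-by-term agreement of the $a\to\infty$ asymptotic power series cannot determine the integration constant, because the homogeneous solution $Ce^{-a^2}$ is smaller than every power of $1/a$ and is therefore invisible to the expansion --- two functions with identical complete asymptotic series at infinity may differ by exactly such a term, and after multiplying through by $e^{a^2}$ the remainder bounds are of order $e^{a^2}a^{-N}$, which is useless for pinning down $C$. The repair is to evaluate the constant at $a\to 0^+$ instead: there $\mathrm{erfi}(a)\to 0$, $\tfrac12\mathrm{Ei}(a^2)=\tfrac{\gamma}{2}+\ln a+o(1)$, and $F(a)=-\ln a-\tfrac{\gamma}{2}+o(1)$, the latter obtained by splitting the integral at $t=1$ and using the classical identity $\int_0^1\frac{e^{-u}-1}{u}\,du+\int_1^\infty\frac{e^{-u}}{u}\,du=-\gamma$ under the substitution $u=t^2$; hence $C=\lim_{a\to0^+}\bigl(e^{a^2}F(a)-\tfrac{\pi}{2}\mathrm{erfi}(a)+\tfrac12\mathrm{Ei}(a^2)\bigr)=0$. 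With that single substitution your argument becomes a complete and correct proof of the lemma --- and in fact more detailed than the paper's own, which never exhibits the evaluation of \eqref{eq:lambda} at all.
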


By plugging in this lower bound of $\lambda$ \eqref{eq:lambda_bound} and the value of $\epsilon$ \eqref{eq:theoretical_eps} for exact histogram estimation into the bound on $T$ \eqref{eq:communication_cost}, we can calculate the analytical bound of $T$ that can guarantee \eqref{eq:exact_hist_cond} with high probability. As stated in the following theorem.

\begin{theorem} \label{th:T_bound}
    Using our \ac{MVCS} scheme, exact histogram estimation is guaranteed with probability $1-e^{1-\delta}$ if the number of measurements satisfies
    \[T \geq \frac{n(\delta+\log d)}{\epsilon_*^2} \left( \frac{6c}{\lambda_*} \right)^2\]
    where $c$ is some constant, $\epsilon_*$ is given in \eqref{eq:theoretical_eps}, and $\lambda_*$ is given in \eqref{eq:lambda_bound}.
\end{theorem}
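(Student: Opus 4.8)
The plan is to invert the error guarantee of Theorem~\ref{th:zhang} (as specialized to \ac{MVCS} in Theorem~\ref{th:mvcs2}) after feeding it through the histogram post-processing chain, and then solve the resulting inequality for $T$. First I would start from the reconstruction guarantee, which states that with probability at least $1-e^{1-\delta}$ the passive-algorithm estimate obeys $\lVert\hat{\vtX}-\vtX_*/\lVert\vtX_*\rVert_2\rVert_2\leq\epsilon$ with $\epsilon=\frac{3\gamma}{\lambda}\sqrt{\lVert\vtX_*\rVert_0}$ and $\gamma=2c\sqrt{(\delta+\log d)/T}$. Substituting $\gamma$ gives
\[
    \epsilon=\frac{6c}{\lambda}\sqrt{\frac{(\delta+\log d)\lVert\vtX_*\rVert_0}{T}}.
\]

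Next I would specialize to the histogram setting. Because $\vtX_*=\sum_i\vtX_i$ is a sum of $n$ one-hot vectors, its support obeys $\lVert\vtX_*\rVert_0\leq n$, so $\epsilon\leq\frac{6c}{\lambda}\sqrt{n(\delta+\log d)/T}$. To convert the correlation constant $\lambda$ into an evaluable quantity, I would invoke the Conjecture that $\lambda\geq\lambda_g=\lambda_*$, with $\lambda_*$ the closed form established in Lemma~\ref{lemma:lambda_g}; this yields $\epsilon\leq\frac{6c}{\lambda_*}\sqrt{n(\delta+\log d)/T}$.

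Then I would connect this error level to exact recovery. Corollary~\ref{cor:epsilon} shows that setting the target error equal to $\epsilon_*$ from \eqref{eq:theoretical_eps} makes the sufficient condition \eqref{eq:exact_hist_cond} hold, provided the chain of Lemmas~\ref{lemma:truncate} and~\ref{lemma:hatx_mag} applies, which requires $\epsilon_*<1/(2n)$; this is immediate from the explicit form of $\epsilon_*$, whose denominator $2\sqrt{2}n\sqrt{n}+2n+\sqrt{2n}$ strictly exceeds $2n$. It therefore suffices to force $\frac{6c}{\lambda_*}\sqrt{n(\delta+\log d)/T}\leq\epsilon_*$. Squaring and solving for $T$ gives
\[
    T\geq\frac{n(\delta+\log d)}{\epsilon_*^2}\left(\frac{6c}{\lambda_*}\right)^2,
\]
which is the claimed bound, and the success probability $1-e^{1-\delta}$ carries over unchanged from Theorem~\ref{th:zhang} since we have only tightened the error target without modifying the underlying event.

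The main obstacle is not the algebra, which is a routine inversion, but the reliance on the Conjecture $\lambda\geq\lambda_*$, which is supported only by the Monte Carlo evidence described before the Conjecture. The rigorous step that replaces $\lambda$ by $\lambda_*$ is exactly where this assumption enters, so the theorem is conditional on it; removing the dependence would require either a proven lower bound on $\lambda$ for general $n$ or a direct analysis of $\mathbb{E}\left[\theta(\langle\vtM_t,\vtX_*\rangle)\langle\vtM_t,\vtX_*\rangle\right]$ when $\vtX_*$ is a superposition of several one-hot atoms, which is harder because $\langle\vtM_t,\vtX_*\rangle$ is then Gaussian with data-dependent variance rather than the standard Gaussian exploited in Lemma~\ref{lemma:lambda_g}.
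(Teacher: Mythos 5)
Your proposal is correct and follows essentially the same route as the paper, which proves Theorem~\ref{th:T_bound} by exactly this chain: invert the guarantee of Theorem~\ref{th:zhang} via \eqref{eq:communication_cost} with $\lVert\vtX_*\rVert_0\leq n$, substitute the exact-recovery target $\epsilon_*$ from Corollary~\ref{cor:epsilon}, and replace $\lambda$ by the conjectured lower bound $\lambda_*$ from Lemma~\ref{lemma:lambda_g}. Your added verifications --- that $\epsilon_*<1/(2n)$ holds because the denominator in \eqref{eq:theoretical_eps} exceeds $2n$, and that the result is conditional on the Conjecture $\lambda\geq\lambda_*$ --- are both accurate and consistent with the paper, which states this derivation only informally in the sentence preceding the theorem.
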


Compared to the best digital scheme for histogram estimation \cite{chen2023communication}, that requires $O(n \log d)$ bits per device or $O(n^2 \log d)$ bits in total, our scheme had a bound with more channel uses, specifically $O(n^4 \log d)$. However, we emphasize that in practice the complexity required to retrieve the exact histogram is much lower, as demonstrated in Section~\ref{sec:numerical_results}.

\section{Numerical Results}
\label{sec:numerical_results}
In this section, we give numerical results for the two highlighted applications. For both applications, we assume that the devices are randomly uniformly located over a disk of radius $R$ with the \ac{AP} in its center. The PDF of the distance $r_i$ between device $i$ and the \ac{AP} is
\begin{equation}\label{eq:distance}
    f_{r_i}(r) = \frac{2r}{R^2},\, 0\leq r\leq R.
\end{equation}

\subsection{Machine Learning}
To evaluate our proposal in the \ac{ML} setting, we train fully connected deep neural networks for the 10-digit MNIST classification problem. Due to the high computational load of training multiple devices and solving the compressed sensing problem in each communication round, it is challenging to run simulations for large neural networks. Therefore, the neural network in our results is fairly light on parameters. The input layer of the considered neural network contains $28^2$ neurons, one for each pixel in the $28\times28$ gray-scale input images. There is one hidden layer of 16 neurons, and an output layer of 10 neurons, one for each class. This leads to a total of $d=\text{n.o. weights}+\text{n.o. biases} = (28^2\cdot16 + 16\cdot10) + (16 + 10) = 12,730$ parameters. All weights are $l_2$-regularized with parameter $\lambda$, the activation function of the hidden layer is ReLU and the output layer employs softmax. In all simulations, the network contains $n=10$ devices that run mini-batch \ac{SGD} with $K=512$. The dataset is normalized such that all pixels have values in $[0,1]$ and in total there are $60,000$ images of training data which is split into $n=10$ datasets of $6,000$ images each without overlap using a uniform distribution, i.e., the data distribution is IID.

As for the communication network, we consider $\sigma_z^2=0.1$, $\sigma_h=1$, $\overline{P}=1$, $R=2$, and $\beta=2$. Note that $\text{SNR}\ll\overline{P}/\sigma_z^2=10$, due to path loss and power control.

For the passive reconstruction algorithm, we use $\delta=3$ for all simulations, which implies that the error bound holds with a probability $P \geq 1-e^{1-\delta} \approx 86\%$. To select the constant $c$ (or equivalently $\gamma$), we perform hyperparameter optimization. Note that this parameter determines the sparsity of the estimated vector, where the limit of $c=\infty$ yields $\hat{\vtX}=\vtZero$ and $c=0$ makes $\hat{\vtX}$ completely dense, this can be seen in Definition \ref{def:soft_thresh} where $\gamma$ determines the threshold below which elements of the vector are discarded. To find $c$, we run a separate simulation for $L$ Monte-Carlo iterations according to Algorithm \ref{alg:c_star}, before starting the machine learning simulation. As an example, we provide the output of the simulation for $k=5$, $d=12,730$, $T=d/2$, and $n=10$ in Figure \ref{fig:c_star}. 
\begin{algorithm}[t]
\caption{Selecting $c$ for the Passive 1bCS algorithm with $L$ Monte-Carlo iterations.}
\label{alg:c_star}
\begin{algorithmic}[1]
\init
    \State $\vtE \leftarrow \vtZero$ (error vector)
\endinit
\For{each $j\in\{0,1,..,L-1\}$}
    \For{each $i\in\{1,..,n\}$}
        \State $\vtN \leftarrow k$ random integers $\in\{1,...,d\}$
        \State $\vtX_i(\vtN) \leftarrow \text{randn(k, 1)}$ ($\vtN$ is non-zero indices)
        \State $\vtX_i \leftarrow \vtX_i/\lVert\vtX_i\rVert_2$
        \State $\vtS_i \leftarrow \mathbf{M}\vtX_i$
        \State $l_i \leftarrow $ Equations \eqref{eq:distance} \& \eqref{eq:pathloss}
    \EndFor
    \State $\vtX_* \leftarrow \sum_i\vtX_i$
    \State $\mathbf{M} \leftarrow$ randn$(T, d)$
    \State $\vtY \leftarrow$ Equation \eqref{eq:rx_samples}
    \State $\vtB \leftarrow$ Equation \eqref{eq:rx_quantize}
    \For{each $c\in\{0.01,0.02,..,2.99, 3.00\}$}
        \State $\gamma \leftarrow 2c\sqrt{\frac{\delta + \log(d)}{T}}$
        \State $\hat{\vtX} \leftarrow$ Equation \eqref{eq:x_hat2}
        \State $\vtE(c) \leftarrow \vtE(c) + \lVert\vtX_* - \hat{\vtX}\rVert_2^2$
    \EndFor
\EndFor
\State $c^*\leftarrow \text{argmin}(\vtE)$
\end{algorithmic}
\end{algorithm}
\begin{figure}[t]

\begin{tikzpicture}
\begin{axis}[
        xlabel={$c$},
        ylabel={$\text{average error}$},
        x label style={at={(0.5,-0.3)}},
        y label style={at={(-0.15,0.5)}},
        width=9cm,
        height=3.5cm,
        inner sep=0pt,
        outer sep=0pt,
        tick align=outside,
        tick pos=left,
        xmin=0, xmax=3,
        ymin=0, ymax=2,
        xtick={0.5, 1.0, 1.5, 2.0, 2.5},
        ytick={0.50, 1.00, 1.50},
        legend style={nodes={scale=0.75, transform shape}, at={(0.98,0.98)}},
        ymajorgrids=true,
        xmajorgrids=true,
        grid style=dashed,
        grid=both,
        grid style={line width=.1pt, draw=gray!15},
        major grid style={line width=.2pt,draw=gray!40},
    ]
    \addplot[
        color=antiquefuchsia,
        mark=,
        mark options = {rotate = 180},
        line width=1pt,
        mark size=2pt,
        ]
    table[x=c_list,y=cumulative_error_list]
    {Data/cstar.txt};
    \end{axis}
\end{tikzpicture}

    \caption{Hyperparameter optimization of the $c$-constant in the passive 1-bit compressive sensing algorithm. Each value of c is evaluated for $L=100$ random generations of the data vectors $\vtX_i$, the measurement matrix $\mathbf{M}$, the noise, the fading coefficients, and the device locations. If $c$ is chosen as a large number, the algorithm selects $\hat{\vtX}=\vtZero$, which is why the error curve gets flat as $c>1.5$. For this particular setup, the algorithm returns $c^*=0.40$, which is then used in the machine learning simulation.}
    \label{fig:c_star}
\end{figure}
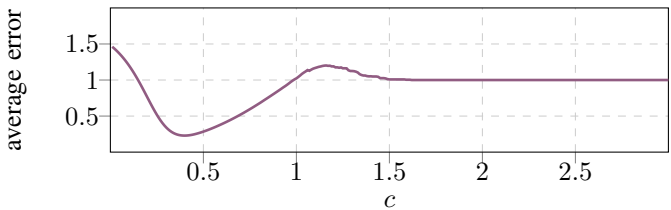

As discussed in Section \ref{sec:step_size}, the norm of the gradient is lost with \ac{MVCS}, which makes the selection of step size difficult. The choice of step size in \ac{MVCS} is an interesting research problem on its own, which can have significant impact on the convergence of the \ac{FedAvg} algorithm. To restrict the scope of our work, we select the step size numerically, i.e., we select $\eta_s$, $\eta_f$, and $\eta_d$ for \eqref{eq:learning_schedule} based on numerical results. In particular, we train the \ac{ML} model under two baselines where communication conditions are perfect. The first is the dense baseline, where $k=d$ in the sparsification step \eqref{eq:k_top_r} and the average model update $1/n\sum_{i}\vtX_i^{(l)}$ is recreated perfectly by an oracle at the receiver. This first baseline corresponds to conditions where communication is distortion-free and resources are unlimited. The second is the sparse baseline, where $r$Top-$k$ sparsification \eqref{eq:k_top_r} is performed with $k<r\ll d$ and the sparse average model update is recreated perfectly by an oracle. This second baseline corresponds to conditions where communication is distortion-free but resources are limited to transmit $k$ elements per device. In Figure \ref{fig:norm}, we illustrate the evolution of the gradient norm, i.e. $\lVert\hat{\vtX}^{(l)}\rVert_2$ from \eqref{eq:model_update}, for $k=5$ and $r=10$. From the result, curve fitting yields $\eta_s=0.15$, $\eta_f=0.02$, $\eta_d=0.01$ in the dense scenario and $\eta_s=0.05$, $\eta_f=0.25$, $\eta_d=0.01$ in the sparse scenario. Generally, the norm of the dense update starts out high and then decreases as the \ac{ML} model approaches a local optimum. The sparse update, however, can grow as the model converges. The main reason for this behavior is the error accumulation step \eqref{eq:error_accumulate}. In every communication round, $k$ elements of the accumulated error $\mathbf{\Delta}_i^{(l)}$ are reset to zero, and the remaining $d-k$ elements change. As long as the gradient is pointing in a similar direction, the magnitude of most of these $d-k$ elements increase every communication round. Since $\mathbf{\Delta}_i^{(l)}$ is added to the transmitted update (see Algorithm \ref{alg:MVCS_FL}), the magnitude of the update tends to increase.

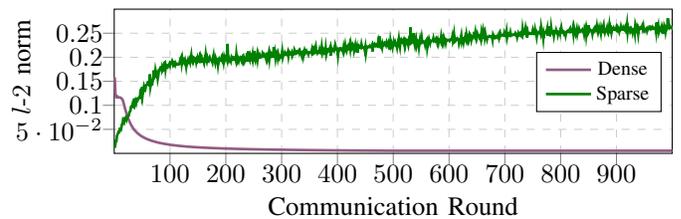
\begin{figure}[t]

\begin{tikzpicture}
\begin{axis}[
        xlabel={Communication Round},
        ylabel={$l$-$2$ norm},
        x label style={at={(0.5,-0.3)}},
        y label style={at={(-0.15,0.6)}},
        width=9cm,
        height=3.5cm,
        inner sep=0pt,
        outer sep=0pt,
        tick align=outside,
        tick pos=left,
        xmin=0, xmax=1000,
        ymin=0, ymax=0.3,
        xtick={100, 200, 300, 400, 500, 600, 700, 800, 900},
        ytick={0.05, 0.10, 0.15, 0.20, 0.25},
        legend style={nodes={scale=0.75, transform shape}, at={(0.98,0.7)}},
        ymajorgrids=true,
        xmajorgrids=true,
        grid style=dashed,
        grid=both,
        grid style={line width=.1pt, draw=gray!15},
        major grid style={line width=.2pt,draw=gray!40},
    ]
    \addplot[
        color=antiquefuchsia,
        mark=,
        mark options = {rotate = 180},
        line width=1pt,
        mark size=2pt,
        ]
    table[x=comm_round,y=norm1]
    {Data/norms.txt};
    \addplot[
        color=cssgreen,
        mark=,
        mark options = {rotate = 180},
        line width=1pt,
        mark size=2pt,
        ]
    table[x=comm_round,y=norm3]
    {Data/norms.txt};
    \legend{Dense, Sparse};
    \end{axis}
\end{tikzpicture}

    \caption{Illustration of gradient norms for dense and sparse distributed gradient descent. Since the norm information is lost when employing \ac{AirComp} with majority vote, the selection of step size is important for the learning algorithm. Unlike dense gradient updates, the sparse version does not necessarily have decreasing gradient norms. The reason for this is the error accumulation step, which appends lost gradient information each time the gradient is sparsified. }
    \label{fig:norm}
\end{figure}

Once $c$, $\eta_s$, $\eta_f$, and $\eta_d$ have been selected, the distributed machine learning simulations begin. For this evaluation, we compare \ac{MVCS} to three baselines. Two baselines are the perfect dense and perfect sparse baselines which we used to compute the step sizes. For these baselines, we consider that the oracles also recreate the norm. Since the norm is available, we choose $\eta_s=\eta_f=1$ for the perfect baselines. The third baseline is the state-of-the-art dense \ac{MV} \ac{AirComp} scheme from \cite{csahin2023distributed}. Essentially, the communication scheme from \ac{MVCS} and \cite{csahin2023distributed} are the same, except that \cite{csahin2023distributed} does not compress the updates with \ac{1bCS}. In each communication round, \cite{csahin2023distributed} allocates $2d$ orthogonal radio resources, and each device communicates in one of the two resources to indicate a positive or negative vote. In our implementation of \cite{csahin2023distributed}, the transmission power is set to $p_i = 2\overline{P}l_{\min}^2/l_i^2$ and $a_i[t]=\sqrt{p_i}$ or $a_i[t]=0$ depending on the vote, which aligns the received powers and satisfies the power constraint. To clarify, \cite{csahin2023distributed} is `dense' in the sense that it uses $2d$ orthogonal radio resources in every communication round. However, we acknowledge that \cite{csahin2023distributed} also can employ sparsification to opt out of transmitting small updates, but not to save communication resources. In our numerical results, we find that the choice of the sparsification threshold for \cite{csahin2023distributed} makes small differences in performance, and decided not to sparsify.

In the first \ac{ML} simulation, we evaluate \ac{MVCS} for distributed \ac{SGD} by setting the number of local iterations $E=1$. In practice, the choice of $E=1$ can be motivated in two ways, either to preserve computational resources at the devices (at the cost of communication resources) or to reduce model drift which naturally occurs for $E>1$ \cite{ozfatura2021fedadc}. In all four schemes, the learning parameters are: $\mu=0.2$, $K=512$, $d=12,730$, and $\lambda = 10^{-3}$. For sparsification, we consider $k=5$ and $r=10$. It is noteworthy that \cite{barnes2020rtop} suggests $r=kn$ but experimentally we found $r=50$ to yield poor results, which is why we reduced it to $r=2k$. This is likely related to the compressed sensing algorithm, since a bigger $r$ implies bigger $\lVert\vtX_*\rVert_0$ for correlated gradients, and the inverse problem of the \ac{1bCS} algorithm is easier when $\lVert\vtX_*\rVert_0$ is smaller, as expressed in \eqref{eq:communication_cost}.

The proposed \ac{MVCS} algorithm is used with $T=d/2$, i.e. $2\times$ compression and $c^*=0.40$ as selected by Algorithm \ref{alg:c_star}. Since our proposal costs half the communication resources of the dense schemes, we allow the sparse schemes to run for $L=1,000$ communication rounds and the dense schemes for $L=500$ rounds. This choice implies that the radio resource consumption for the dense and sparse schemes is the same, equating the total transmission time and consumed transmission energy. For both \ac{AirComp} schemes, we use the step size schedules selected in connection to Figure \ref{fig:norm}, i.e., $\eta_s=0.15$, $\eta_f=0.02$, $\eta_d=0.01$ for \cite{csahin2023distributed} and $\eta_s=0.05$, $\eta_f=0.25$, $\eta_d=0.01$ for \ac{MVCS}. The results of this simulation is presented in Figure \ref{fig:sgd}. The perfect dense scheme converges significantly faster than all alternatives, and even with half as many communication rounds, it outperforms the perfect sparse baseline in terms of classification accuracy. Due to the relatively low compression rate of $2\times$, the proposed \ac{MVCS} has a low estimation error and performs similarly to the perfect sparse scheme. For the first 200 communication rounds, the dense \ac{MV} baseline converges faster than the sparse schemes. However, as the classification accuracy improves, it appears that communication errors slow down convergence and the sparse schemes overtake it. After $L=1,000$ rounds, the proposed \ac{MVCS} scheme has about 2-3\% improved classification accuracy over \cite{csahin2023distributed}.
\begin{figure}[t]

\begin{tikzpicture}
\begin{axis}[
        xlabel={Communication Round},
        ylabel={Classification Accuracy},
        x label style={at={(0.5,-0.15)}},
        y label style={at={(-0.15,0.5)}},
        width=9cm,
        height=5.5cm,
        inner sep=0pt,
        outer sep=0pt,
        tick align=outside,
        tick pos=left,
        xmin=0, xmax=1000,
        ymin=0.8, ymax=0.92,
        xtick={100, 200, 300, 400, 500, 600, 700, 800, 900},
        ytick={0.82, 0.84, 0.86, 0.88, 0.90},
        legend style={nodes={scale=0.75, transform shape}, at={(0.98,0.5)}},
        ymajorgrids=true,
        xmajorgrids=true,
        grid style=dashed,
        grid=both,
        grid style={line width=.1pt, draw=gray!15},
        major grid style={line width=.2pt,draw=gray!40},
    ]
    \addplot[
        color=antiquefuchsia,
        mark=,
        mark options = {rotate = 180},
        line width=1pt,
        mark size=2pt,
        ]
    table[x=comm_round,y=testacc1]
    {Data/sgd_solid_500.txt};
    \addplot[
        color=cssgreen,
        mark=,
        mark options = {rotate = 180},
        line width=1pt,
        mark size=2pt,
        ]
    table[x=comm_round,y=testacc2]
    {Data/sgd_solid_1000.txt};
    \addplot[
        color=carolinablue,
        mark=,
        mark options = {rotate = 180},
        line width=1pt,
        mark size=2pt,
        ]
    table[x=comm_round,y=testacc3]
    {Data/sgd_solid_1000.txt};
    \addplot[
        color=amaranth,
        mark=,
        mark options = {rotate = 180},
        line width=1pt,
        mark size=2pt,
        ]
    table[x=comm_round,y=testacc4]
    {Data/sgd_solid_500.txt};
    \addplot[
        color=antiquefuchsia,
        mark=,
        mark options = {rotate = 180},
        line width=1.5pt,
        mark size=2pt,
        dotted,
        ]
    table[x=comm_round,y=perfect_dense_dot]
    {Data/sgd_dot.txt};
    \addplot[
        color=amaranth,
        mark=,
        mark options = {rotate = 180},
        line width=1.5pt,
        mark size=2pt,
        dotted,
        ]
    table[x=comm_round,y=dense_mv_dot]
    {Data/sgd_dot.txt};
    \legend{Perfect Dense, MVCS, Perfect Sparse, Dense MV \cite{csahin2023distributed}};
    \end{axis}
\end{tikzpicture}

    \caption{Illustration of the classification accuracy on the test dataset for the distributed SGD algorithm. There's four schemes being compared, two that employ dense updates, and two that employ sparse updates. In this example, the sparse MVCS scheme uses half as many radio resources per communication round as the Dense MV scheme. We allow the sparse schemes to run for twice as many communication rounds as the dense schemes to equalize radio resource consumption.}
    \label{fig:sgd}
\end{figure}
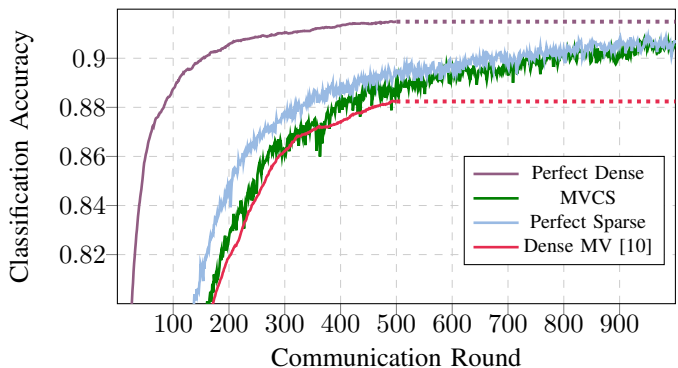

To evaluate the \ac{FedAvg} scenario, we run another simulation with $E=10$ local iterations per communication round. Mostly the same parameters are used as in the \ac{SGD} simulation, with one exception. In particular, we find that in the \ac{FedAvg} setting, the gap between the perfect dense and perfect sparse scheme is too great at $k=5$, and therefore, \ac{MVCS} cannot compete even if its communication performance would be perfect. Instead $k=10$ is used for sparsification and therefore the learning rate $\mu$ is halved to $\mu=0.1$ for the sparse schemes. We keep the learning rate of the dense schemes as $\mu=0.2$, since lowering it would artificially slow them down. Moreover, we rerun the prior optimizations to get $c^*=0.37$ $\eta_s=0.4$, $\eta_f=0.3$, $\eta_d=0.01$ for \ac{MVCS} and $\eta_s=0.3$, $\eta_f=0.02$, $\eta_d=0.01$ for \cite{csahin2023distributed}.

The results of the \ac{FedAvg} simulation is presented in Figure \ref{fig:fl}. As before, the perfect dense scheme converges significantly faster than the others, such that even with half the communication rounds of the perfect sparse scheme, it is superior. Even though $k$ is doubled compared to the previous simulation, the perfect sparse scheme still has a greater gap to the dense counterpart than it does for \ac{SGD}, indicating that sparsification performs worse when $E>1$. Additionally, because the sparsity $k$ has been doubled compared to Figure \ref{fig:sgd}, the reconstruction problem is harder, and we see some gap in performance between the perfect sparse scheme and the proposed \ac{MVCS}. Despite this, the proposed \ac{MVCS} scheme has some advantage over \ac{FedAvg}, but it is less significant than for \ac{SGD}.

We wish to note that while the experiment is set up such that the total radio resource consumption of \ac{MVCS} and dense \ac{MV} is the same, the computational cost of \ac{MVCS} is significantly higher than dense \ac{MV} for two reasons. First, the training algorithm has to run for twice as many iterations when the number of communication rounds is doubled. Second, the compression step at the user devices and the inverse problem at the receiver costs additional computational resources. Therefore, \ac{MVCS} offers improved results when radio resources are limited, but this may not hold if computation is the bottleneck.


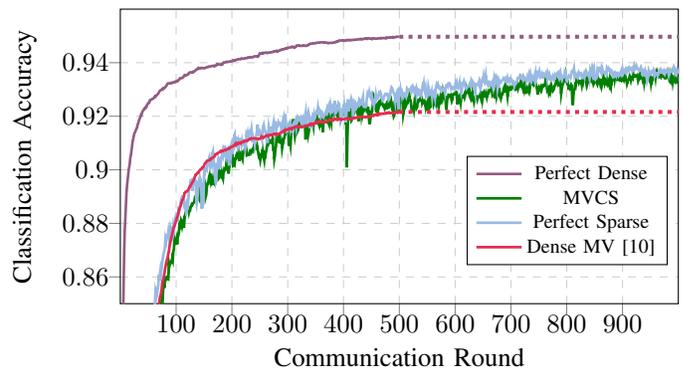
\begin{figure}[t]

\begin{tikzpicture}
\begin{axis}[
        xlabel={Communication Round},
        ylabel={Classification Accuracy},
        x label style={at={(0.5,-0.15)}},
        y label style={at={(-0.15,0.5)}},
        width=9cm,
        height=5.5cm,
        inner sep=0pt,
        outer sep=0pt,
        tick align=outside,
        tick pos=left,
        xmin=0, xmax=1000,
        ymin=0.85, ymax=0.96,
        xtick={100, 200, 300, 400, 500, 600, 700, 800, 900},
        ytick={0.86, 0.88, 0.90, 0.92, 0.94},
        legend style={nodes={scale=0.75, transform shape}, at={(0.98,0.5)}},
        ymajorgrids=true,
        xmajorgrids=true,
        grid style=dashed,
        grid=both,
        grid style={line width=.1pt, draw=gray!15},
        major grid style={line width=.2pt,draw=gray!40},
    ]
    \addplot[
        color=antiquefuchsia,
        mark=,
        mark options = {rotate = 180},
        line width=1pt,
        mark size=2pt,
        ]
    table[x=comm_round,y=testacc1]
    {Data/fl_solid_500.txt};
    \addplot[
        color=cssgreen,
        mark=,
        mark options = {rotate = 180},
        line width=1pt,
        mark size=2pt,
        ]
    table[x=comm_round,y=testacc2]
    {Data/fl_solid_1000.txt};
    \addplot[
        color=carolinablue,
        mark=,
        mark options = {rotate = 180},
        line width=1pt,
        mark size=2pt,
        ]
    table[x=comm_round,y=testacc3]
    {Data/fl_solid_1000.txt};
    \addplot[
        color=amaranth,
        mark=,
        mark options = {rotate = 180},
        line width=1pt,
        mark size=2pt,
        ]
    table[x=comm_round,y=testacc4]
    {Data/fl_solid_500.txt};
    \addplot[
        color=antiquefuchsia,
        mark=,
        mark options = {rotate = 180},
        line width=1.5pt,
        mark size=2pt,
        dotted
        ]
    table[x=comm_round,y=testacc1]
    {Data/fl_dot.txt};
    \addplot[
        color=amaranth,
        mark=,
        mark options = {rotate = 180},
        line width=1.5pt,
        mark size=2pt,
        dotted
        ]
    table[x=comm_round,y=testacc4]
    {Data/fl_dot.txt};
    \legend{Perfect Dense, MVCS, Perfect Sparse, Dense MV \cite{csahin2023distributed}};
    \end{axis}
\end{tikzpicture}

    \caption{Illustration of the classification accuracy on the test dataset for the FL algorithm. Compared to the SGD scenario, the benefit of sparsification is less pronounced for FL.}
    \label{fig:fl}
\end{figure}

\subsection{Histogram Estimation}
For histogram estimation, we use the following parameters: $\sigma_z=0.1$, $\sigma_h=1$, $\overline{P}=1$, $R=2$, and $\beta=2$ for the communication network, and $\delta=3$ for the passive reconstruction algorithm. The constant $c$ is chosen empirically. We conduct 100 Monte-Carlo iterations to calculate the probability of exact histogram estimation. We generate the user item by sampling from a uniform distribution over $d$-dimensional one-hot vectors, unless stated otherwise.

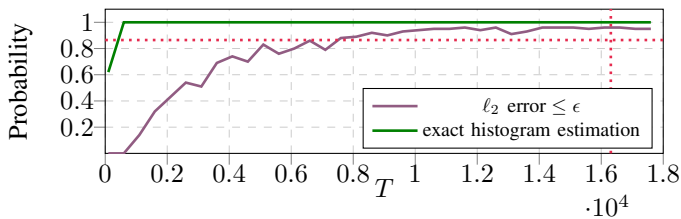
\begin{figure}[t]
\centering

\begin{tikzpicture}
\begin{axis}[
        xlabel={$T$},
        ylabel={Probability},
        x label style={at={(0.5,-0.18)}},
        y label style={at={(-0.13,0.5)}},
        width=9cm,
        height=3.5cm,
        inner sep=0pt,
        outer sep=0pt,
        tick align=outside,
        tick pos=left,
        xmin=0, xmax=18000,
        ymin=0, ymax=1.1,
        ytick={0.2, 0.4, 0.6, 0.8, 1.0},
        legend style={nodes={scale=0.75, transform shape}, at={(0.98,0.45)}},
        ymajorgrids=true,
        xmajorgrids=true,
        grid style=dashed,
        grid=both,
        grid style={line width=.1pt, draw=gray!15},
        major grid style={line width=.2pt,draw=gray!40},
    ]
    \addplot[
        color=antiquefuchsia,
        mark=,
        mark options = {rotate = 180},
        line width=1pt,
        mark size=2pt,
        ]
    table[x=T_list,y=P_error]
    {Data/fig1_d100_n3_c0.15.mat.txt};
    \addplot[
        color=cssgreen,
        mark=,
        mark options = {rotate = 180},
        line width=1pt,
        mark size=2pt,
        ]
    table[x=T_list,y=P_exact]
    {Data/fig1_d100_n3_c0.15.mat.txt};
    \addplot[
        color=amaranth,
        dotted,
        line width=1pt,
        ]
    table[x=x,y=y]
    {Data/prob_thresh.txt};
    \addplot[
        color=amaranth,
        dotted,
        line width=1pt,
        ]
    table[x=x,y=y]
    {Data/T_bound.txt};
    \legend{$\ell_2 \hspace{4pt} \text{error} \leq \epsilon$, exact histogram estimation};
    \end{axis}
\end{tikzpicture}

    \caption{Probability of exact histogram estimation and $\ell_2$ error being bounded by $\epsilon$ in \eqref{eq:theoretical_eps}. For $d=100, n=3,$ and $ c=0.15$. 
    The horizontal dotted line indicates $1-e^{1-\delta} \approx 86\%$ which came from Theorem \ref{th:zhang}, and the vertical dotted line indicates the analytical bound of $T$ from Theorem \ref{th:T_bound} that guarantees $\ell_2$ error to be bounded by $\epsilon$ and thus the exact histogram estimation with the probability above.}
    \vspace{-10pt}
    \label{fig:analytical_T}
\end{figure}

First, we aim to show that the analytical bound of $T$ from Theorem \ref{th:T_bound} can guarantee the exact histogram estimation with high probability. 
For $d=100$ and $n=3$, we count the number of cases where the $\ell_2$ error is less than our threshold $\epsilon$ in \eqref{eq:theoretical_eps}, and also compute the empirical probability of recovering the exact histogram. 
From Figure \ref{fig:analytical_T}, we can observe that the probability of $\ell_2$ error being less than or equal to $\epsilon$ is higher than $1-e^{1-\delta}$ for $T$ greater than the analytical bound, as expected. Additionally, exact histogram recovery is still possible even when the $\ell_2$ error being less than $\epsilon$ is not guaranteed with high probability.




\begin{figure}[t]
\centering

\begin{subfigure}{\textwidth}
\begin{tikzpicture}
\begin{axis}[
        xlabel={$n$},
        ylabel={$\sqrt{T}$},
        x label style={at={(0.5,-0.15)}},
        y label style={at={(-0.1,0.5)}},
        width=9cm,
        height=3.5cm,
        inner sep=0pt,
        outer sep=0pt,
        tick align=outside,
        tick pos=left,
        xmin=0, xmax=35,
        ymin=0, ymax=220,
        legend style={nodes={scale=0.75, transform shape}, at={(0.95,0.3)}},
        ymajorgrids=true,
        xmajorgrids=true,
        grid style=dashed,
        grid=both,
        grid style={line width=.1pt, draw=gray!15},
        major grid style={line width=.2pt,draw=gray!40},
        scatter/classes={
        a={mark=o,draw=blue},
        b={mark=square,draw=red}
      }
    ]
    \addplot[
          scatter,
          only marks,
          scatter src=explicit symbolic
        ]
        table[meta=label] {
        x   y   label
          5	26.45751311	a
10	54.77225575	a
15	89.4427191	a
20	122.4744871	a
25	158.113883	a
30	189.7366596	a
        };
    \addplot[
        color=cssgreen,
        dotted,
        line width=1pt,
        ]
    table[x=x,y=y]
    {Data/trendline.txt};
    \legend{uniform};
    \end{axis}
\end{tikzpicture}
\caption{The user item is uniformly distributed across the all possible one-hot \\
vectors. $c=0.3$.}
\end{subfigure}

\vspace{6 pt}

\begin{subfigure}{\textwidth}
\begin{tikzpicture}
\begin{axis}[
        xlabel={$n$},
        ylabel={$T^{1/3}$},
        x label style={at={(0.5,-0.15)}},
        y label style={at={(-0.1,0.5)}},
        width=9cm,
        height=3.5cm,
        inner sep=0pt,
        outer sep=0pt,
        tick align=outside,
        tick pos=left,
        xmin=0, xmax=35,
        ymin=0, ymax=45,
        legend style={nodes={scale=0.75, transform shape}, at={(0.95,0.3)}},
        ymajorgrids=true,
        xmajorgrids=true,
        grid style=dashed,
        grid=both,
        grid style={line width=.1pt, draw=gray!15},
        major grid style={line width=.2pt,draw=gray!40},
        scatter/classes={
        b={mark=square,draw=red}
      }
    ]
    \addplot[
          scatter,
          only marks,
          scatter src=explicit symbolic
        ]
        table[meta=label] {
        x   y   label
          5	4.791419857	b
10	11.6960709528515	b
15	18.1712059283214	b
20	24.1014226417523	b
25	31.0723250595386	b
30	39.1486764116886	b
        };
    \addplot[
        color=cssgreen,
        dotted,
        line width=1pt,
        ]
    table[x=x,y=y2]
    {Data/trendline.txt};
    \legend{$\Vert \vtX_* \Vert_0 = n/5$};
    \end{axis}
\end{tikzpicture}
\caption{The user items are concentrated in $n/5$ bins. $c=1.2$.}
\end{subfigure}

    \caption{The number of channel uses $T$ required for exact histogram estimation for $d=1000$.}
    \label{fig:empirical_T}
\end{figure}
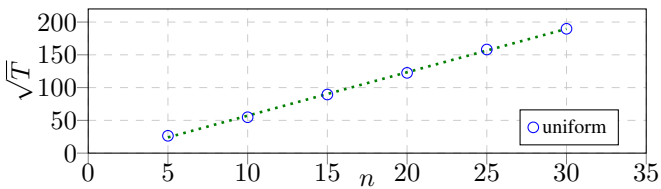
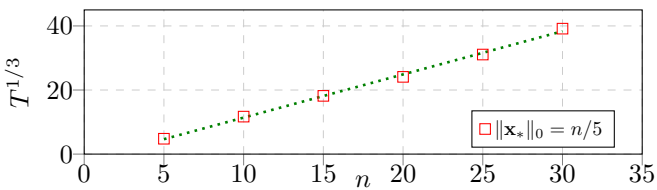

To evaluate the practical number of measurements $T$ required for exact histogram estimation, we perform a grid search for $d=1000$ and different values of $n$. Specifically, we identify the smallest $T$ that can recover the exact histogram with probability at least $1-e^{1-\delta} \approx 86\%$. We consider two different cases for generating user items: (1) they are uniformly distributed across all possible one-hot vectors, or (2) they are concentrated within $n/5$ bins. The results are shown in Figure \ref{fig:empirical_T}. For the case where user items are uniformly distributed, $T$ scales as $\mathcal{O}(n^2)$. In contrast, when the user items are concentrated in $n/5$ bins, $T$ scales as $\mathcal{O}(n^3)$. This difference can be explained using the sparsity of the histogram and the corresponding $\epsilon$. For histograms of uniformly distributed items, the $\ell_2$ norm is typically on the order of $\sqrt{n}$. Substituting this into \eqref{eq:exact_hist_bound} and setting the right-hand side equal to 1/2, we can derive that $\epsilon=\mathcal{O}(1/\sqrt{n})$. Since its sparsity is approximately $n$, total $T$ would be $\mathcal{O}(n^2 \log d)$. On the other hand, when items are concentrated in fewer bins, the histogram is sparser and thus has larger $\ell_2$ norm. Having $\epsilon$ closer to $\mathcal{O}(1/n\sqrt{n})$ and the sparsity of the histogram closer to $\mathcal{O}(1)$, $T$ increases to $\mathcal{O}(n^3 \log d)$.

For uniformly distributed items, our scheme achieves the same order of communication cost as the best digital scheme \cite{chen2023communication}, which requires $\mathcal{O}(n \log d)$ bits per user and $\mathcal{O}(n^2 \log d)$ bits in total. However, \cite{chen2023communication} uses Secure Aggregation to ensure that the AP can recover only the histogram of the items and not the individual items of the users. The secure aggregation protocol requires users to have access to shared keys, which are established ahead of time using a cryptographic protocol with high communication cost. The communication cost of establishing these secure keys is not taken into account in \cite{chen2023communication}.  In contrast, \ac{MVCS} leverages the additive nature of the wireless multiple access channel to ensure that the AP only learns the histogram of the items, eliminating the need for additional cryptographic protocols to preserve privacy.

\begin{figure}

\begin{tikzpicture}
\begin{axis}[
        xlabel={$T$},
        ylabel={Probability},
        x label style={at={(0.5,-0.25)}},
        y label style={at={(-0.1,0.5)}},
        width=9cm,
        height=3.5cm,
        inner sep=0pt,
        outer sep=0pt,
        tick align=outside,
        tick pos=left,
        xmin=0, xmax=5500,
        ymin=0, ymax=1.1,
        ytick={0.2, 0.4, 0.6, 0.8, 1.0},
        legend style={nodes={scale=0.75, transform shape}, at={(0.33,0.85)}},
        ymajorgrids=true,
        xmajorgrids=true,
        grid style=dashed,
        grid=both,
        grid style={line width=.1pt, draw=gray!15},
        major grid style={line width=.2pt,draw=gray!40},
    ]
    \addplot[
        color=antiquefuchsia,
        mark=,
        mark options = {rotate = 180},
        line width=1pt,
        mark size=2pt,
        ]
    table[x=T_list,y=d1e4]
    {Data/compression.txt};
    \addplot[
        color=cssgreen,
        mark=,
        mark options = {rotate = 180},
        line width=1pt,
        mark size=2pt,
        ]
    table[x=T_list,y=d3e4]
    {Data/compression.txt};
    \addplot[
        color=cadmiumorange,
        mark=,
        mark options = {rotate = 180},
        line width=1pt,
        mark size=2pt,
        ]
    table[x=T_list,y=d5e4]
    {Data/compression.txt};
    \addplot[
        color=amaranth,
        dotted,
        line width=1pt,
        ]
    table[x=x,y=y]
    {Data/prob_thresh.txt};
    \legend{$d=1\times10^4$, $d=3\times10^4$, $d=5\times10^4$};
    \end{axis}
\end{tikzpicture}

    \caption{Probability of exact histogram estimation. For $n=10$ and $ c=0.4$ with different $d$ values. The horizontal dotted line indicates $1-e^{1-\delta} \approx 86\%$ from Theorem \ref{th:zhang}.}
    \label{fig:compression}
\end{figure}
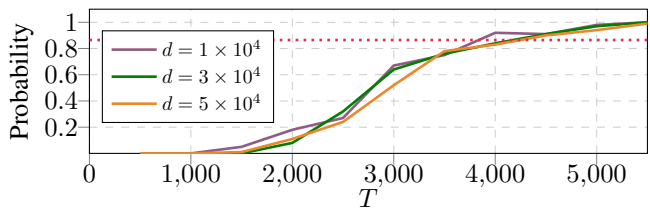

Finally, we illustrate how our scheme can achieve compression for large $d$ values. Since the analytical bound of $T$ from Theorem \ref{th:T_bound} scales with $\log d$, we can expect to achieve higher compression as $d$ increases. Figure \ref{fig:compression} shows the probability of exact histogram estimation for different $d$ values. The $T$ value that achieves the probability of exact histogram estimation proposed from our scheme is approximately 4,000 for $d=1\times10^4, 3\times10^4,$ and $5\times10^4$, and since the number of channel uses is equal to $2T$, it leads to compression ratios of 1.25, 3.75, and 6.25. We couldn't conduct experiments with higher $d$ due to memory issue, but since the bound of $T$ in \eqref{eq:communication_cost} is proportional to $\log d$, the compression ratio will get larger if $d$ is increased further.




\section{Conclusion}
\label{sec:conclusion}
We proposed a non-coherent \ac{AirComp} method for reliable function computation. The proposed approach leverages the state-of-the-art \ac{MV} communication scheme, which uses orthogonal radio resources to indicate the sign of local real-valued vectors. On its own, \ac{MV}-\ac{AirComp} experiences frequent bit errors due to the random phase offsets between the superimposed signals. To correct these errors, we leverage methods from \ac{1bCS}, which serves a similar purpose to a forward error correcting code, offering a receiver-side correction mechanism. Theoretically, we proved that the proposed \ac{MVCS} scheme achieves arbitrarily low errors in $T=\mathcal{O}(kn\log(d))$ channel uses, where $n$ is the number of devices, $k$ is the sparsity of the local vectors, and $d$ is the dimension of the vector. We also provided results for two example applications of \ac{MVCS}: histogram estimation, and distributed \ac{ML}. For histogram estimation, \ac{MVCS} can securely retrieve the histogram in $T=\mathcal{O}(n^4 \log(d))$ channel uses, and in practice, it is achievable with lower complexity. For distributed \ac{ML}, we compared our scheme to a state-of-the-art \ac{AirComp} scheme and found that \ac{MVCS} can reach higher classification accuracy under fixed radio resource constraints. 

Both within histogram estimation and distributed \ac{ML}, the proposed scheme opens up new research directions. In histogram estimation, we suspect that \ac{MVCS} can be improved further since it is not designed to exploit the discrete nature of vectors $\sum_i \vtX_i$. We suspect that sketching \cite{chen2023communication} or quantified group testing \cite{hao1990optimal} could be combined with \ac{AirComp} to design a more efficient method, which remains as future work. In distributed \ac{ML}, the loss of norm information from \ac{1bCS} harms convergence, as described in Section \ref{sec:step_size}. Recent work on \ac{1bCS} with norm estimation could potentially relieve this problem \cite{knudson2016one}.


\bibliographystyle{IEEEtran}
\bibliography{library}

\end{document}